\let\pref=\prettyref
\newcommand{\savehyperref}[2]{\texorpdfstring{\hyperref[#1]{#2}}{#2}}
\newtheorem{theorem}{Theorem}
\newtheorem{lemma}[theorem]{Lemma}
\newtheorem{corollary}[theorem]{Corollary}
\newcommand{\SSS}{\mathcal S}
\newcommand{\MMM}{\mathcal M}
\newcommand{\argmax}{\operatorname{argmax}}
\newcommand{\alg}{\operatorname{ALG}}
\newcommand{\opt}{\operatorname{OPT}}
\newcommand{\Ex}{\mathbb E}
\newif\ifcomments
\newcommand{\remove}[1]{}
\begin{document}
\title{Welfare Maximization with Production Costs:\\ A Primal Dual Approach}

\author {Zhiyi Huang\thanks{This work was done when the author was at Stanford University. Supported in part by ONR PECASE Grant N000140910967.}\\
The University of Hong Kong\\
zhiyi@cs.hku.hk
\and
Anthony Kim\thanks{Supported in part by an NSF Graduate Research Fellowship and NSF grant CCF-1215965.}\\ 
Stanford University\\
tonyekim@stanford.edu
}

\date{}


\maketitle
\thispagestyle{empty}

\begin{abstract}
\small\baselineskip=9pt
We study online combinatorial auctions with production costs proposed by \citet{BGMS11} using the online primal dual framework. In this model, buyers arrive online, and the seller can produce multiple copies of each item subject to a non-decreasing marginal cost per copy. The goal is to allocate items to maximize social welfare less total production cost. For arbitrary (strictly convex and differentiable) production cost functions, we characterize the optimal competitive ratio achievable by online mechanisms/algorithms. We show that online posted pricing mechanisms, which are incentive compatible, can achieve competitive ratios arbitrarily close to the optimal, and construct lower bound instances on which no online algorithms, not necessarily incentive compatible, can do better. Our positive results improve or match the results in several previous work, e.g., \citet{BGN03}, \citet{BGMS11}, and \citet{BG13}. Our lower bounds apply to randomized algorithms and resolve an open problem by \citet{BG13}.
\end{abstract}


\newpage

\section{Introduction}\label{sec:intro}

Consider a seller with $m$ heterogeneous items to allocate to $n$ heterogeneous buyers to maximize social welfare, that is, the sum of the buyers' value for the items they obtain. When buyers have combinatorial value functions over bundles of items, it is known as combinatorial auctions, a central problem in economics and algorithmic game theory.

Combinatorial auctions are computationally challenging. 
For instance, when buyers have arbitrary value functions, there are no polynomial-time algorithms that approximately maximize social welfare within a factor better than $m^{1/2}$ (e.g., \cite{BN07}).
Moreover, if we take the strategic behavior of self-interested buyers into account, i.e., focusing on polynomial-time and incentive compatible mechanisms, then even computationally simple special cases become intractable.
For example, there are polynomial-time $\tfrac{e}{e-1}$-approximation algorithms when buyers have submodular value functions \cite{Von08}, but no polynomial-time and incentive compatible mechanisms can be better than an $m^\gamma$-approximation for some constant $\gamma$ \cite{DV14}.

Part of the difficulty of combinatorial auctions comes from the stringent supply constraints -- the seller has only one copy per item. In many applications, the seller may have multiple copies of each item, or can even produce an arbitrary number of copies paying certain production cost. Therefore, it is natural to consider variants of combinatorial auctions with relaxed supply constraints, hoping that they are more tractable.

The first set of results along this line considers having logarithmically many copies of each item.
With $\Omega(\tfrac{1}{\epsilon^2} \log m)$ copies, an $(1 - \epsilon)$-approximation algorithm is folklore because the standard linear program relaxation of combinatorial auctions has an integrality gap of $1-\epsilon$.
\citet{LS11} further showed how to convert this algorithm into an $(1-\epsilon)$-approximate and incentive compatible mechanism.
\citet{BGN03} considered the online setting where buyers arrive online and the seller must allocates to each buyer at his arrival without any information about future buyers. 
Assuming the buyers' value for any bundle is in the range of $[v_{\min}, v_{\max}]$\footnote{An upper bound on values is necessary for any non-trivial competitive ratio. Otherwise, after the seller has exhausted the supply of an item, there could be a buyer with value for the item arbitrarily large relative to the previous buyers' values.} and there are $\Omega(\log (v_{\max}/v_{\min}))$ copies per item, they introduced an $O(\log (v_{\max}/v_{\min}))$-competitive\footnote{Some previous work assume knowing $v_{max}$ and, instead of $v_{\min}$, the number of buyers, $n$. In this case, using $v_{\max} / n$ as an effective $v_{\min}$ leads to $O(\log n)$-competitive algorithms. Most techniques can be translated between the two settings. We assume knowing $v_{\max}$ and $v_{\min}$ throughout this paper for consistency.} and incentive compatible mechanism.

Recently, \citet{BGMS11} studied online combinatorial auctions in a more general model with production costs.
In this model, the seller may produce any number of copies of the items while paying a non-decreasing marginal production cost per copy.
The goal is to maximize social welfare less the total production cost. 
The aforementioned model is a special case of the production cost model with the marginal production cost being a zero-infinity step function.
\citet{BGMS11} considered several simple marginal production cost functions, including linear, polynomial and logarithmic functions, and proposed constant competitive algorithms for these special cases. 
They also studied general cost functions assuming the values are between $v_{\min}$ and $v_{\max}$, and introduced logarithmic competitive algorithms. 
However, it is not clear if their competitive ratios are optimal, even for the special cases, and there was no characterization of the optimal competitive ratio achievable on a per cost function basis.

In this paper, we further investigate online combinatorial auctions with production costs. 
We seek to develop a unified framework that provides online mechanisms/algorithms with the optimal competitive ratios for arbitrary production costs using the online primal dual framework (see, e.g., \citet{BN09} for a comprehensive survey).
Informally, the online primal dual framework considers the linear program relaxation of an optimization problem and its dual program, and designs online algorithms based on the structure of the linear programs and complementary slackness.
It is not clear, however, how to model production costs using linear programs.
Instead, we use an extension of the technique to convex programs and Fenchel duality following the recent work of \citet{DJ12}, \citet{DH14}, etc.

\subsection{Our Contributions and Techniques} 

Our main contribution is a characterization of the optimal algorithms/mechanisms and their competitive ratios for online combinatorial auctions with arbitrary production costs via an online primal dual approach:

We start with a fractional version of the problem in \pref{sec:frac}, where there are infinitely many buyers each of which wants at most an infinitesimal amount of each item.
The fractional version allows us to focus on the online nature of the problem, while ignoring extra complications from the integrality gap of using convex program relaxations.
Then, we characterize a parameter $\alpha(f)$ that depends on the production cost function $f$, and show that (1) there are $(\alpha(f) + \epsilon)$-competitive and incentive compatible mechanisms, and (2) there are no $(\alpha(f) - \epsilon)$-competitive online algorithms, for arbitrarily small $\epsilon > 0$.
The optimal competitive ratio achievable $\alpha(f)$ is the infimum of parameters $\alpha$ such that a differential equation parameterized by $\alpha$ has a feasible solution.
More specifically, a feasible solution to the differential equation with $\alpha = \alpha(f) + \epsilon$ would yield a relation between the primal and dual variables in the online primal dual approach and, consequently, a competitive online mechanism. 
Our lower bound is obtained by constructing a family of instances such that if there is an online algorithm that is $(\alpha(f) - \epsilon)$-competitive for these instances, then there is a feasible solution to the differential equation parameterized by $\alpha = \alpha(f) - \epsilon$, contradicting our definition of $\alpha(f)$. 
In short, our mechanism and lower bound for a given production cost function $f$ reduce to the same differential equation and thus are optimal.
To the best of our knowledge, the only other work with this type of characterization of the optimal competitive ratio is the work by \citet{DJ12} on online matchings with concave returns.
Finally, we stress that our mechanisms are incentive compatible and their competitive ratios are optimal even when compared to non-incentive compatible algorithms.

In \pref{sec:int}, we study the integral version and show that under certain conditions, the optimal competitive ratios for the fractional case extend to the integral case with an arbitrarily small loss, say $\epsilon > 0$, in the competitive ratio.
In particular, we have a $((1+\frac{1}{d})^{d+1} d + \epsilon)$-competitive algorithm for the polynomial marginal production costs $c(y) = y^d$ and a $(4 + \epsilon)$-competitive algorithm for logarithmic marginal production costs. 
These results improve the competitive ratios previously achieved by \citet{BGMS11}. Our $(4 + \epsilon)$-competitive algorithm further applies to any concave marginal production costs.
We summarize these results in \pref{tab:comp}.

\begin{table*}
\centering
	\caption{Competitive ratios for various marginal production cost functions (for arbitrarily small $\epsilon$):}
	\smallskip
	\begin{tabular}{|l|c|c|}
  \hline 
   & \citet{BGMS11} & This Paper \\
  \hline 
  Linear, $c(y) = ay + b$ & 6 & $4(1+\epsilon)$ \\
  \hline
  Polynomial, $c(y) = a y^d$ & $(1+\epsilon) 4d$ & $(1+\epsilon) (1+\frac{1}{d})^{d+1} d$ \\
  \hline
  Logarithmic, $c(y) = \ln(1+y)$ & $4.93$ & $4(1+\epsilon)$ \\
  \hline
  \end{tabular}
  \label{tab:comp}
\end{table*}

Finally, we consider in \pref{sec:limsupply} the case when the buyers' values are between $v_{\min}$ and $v_{\max}$.
We use the same framework to derive nearly optimal incentive compatible mechanisms with a logarithmic competitive ratio for supply-$k$ online combinatorial auctions. 
As before, our mechanisms are derived from the same differential equation that characterizes online combinatorial auctions.
We also show an almost matching logarithmic lower bound that applies to randomized algorithms, resolving an open problem by \citet{BG13}.

\subsection{Other Related Work}

There is a vast literature on maximizing social welfare in combinatorial auctions. 
In addition to the related work we have already discussed, we refer readers to the survey by \citet{BN07} and the references therein. 
Also, we note recent positive results by \citet{DRY11} when buyers have coverage value functions or matroid rank sums value functions.

Our mechanisms fall into a family known as posted pricing mechanisms, where the seller posts item prices to each buyer and let the buyer pick his favorite bundle of items given the prices.
Posted pricing mechanisms are incentive compatible and are widely used for revenue maximization (see, e.g., \citet{BBM08}, \citet{CHK09}, \citet{CHMS10}).
Some of these results (e.g., \citet{CHK09}) also imply bounded approximation ratios for social welfare maximization.

All of the aforementioned results focus on the case when each item has only one copy, in which case strong assumptions on the value functions are needed to achieve non-trivial positive results.
On the other hand, we consider the production cost model by \citet{BGMS11}, i.e., allowing multiple copies of each item subject to production costs.  
As a result, we are able to achieve positive results for arbitrary value functions.

Our primal dual approach is related to the recent work by \citet{AGK12}, \citet{GKP12}, and \citet{Tha13} on the online scheduling problem using online primal dual analysis or dual fitting with Lagrangian duality. 
Lagrangian duality is defined even for non-convex programs and therefore can be applied to problems without a natural convex program relaxation.
In contrast, we use convex programs and Fenchel duality; fenchel duality is defined only for convex programs but generally presents richer structures that guide the design and analysis of online algorithms.



\section{Preliminaries}\label{sec:prelim}

Let there be a seller with a set of $m$ items, represented by $[m]$, and a group of $n$ buyers that arrive online. 
We use $i$ to represent indices of buyers, $j$ to represent indices of items, and $S$ to represent bundles.

Each item $j$ is associated with a production cost function $f_j: \mathbb{R}^+ \rightarrow \mathbb{R}^+$ where $f_j(y)$ is the total cost to produce $y$ units of item $j$, i.e., the $y$-th (integral) unit costs $f_j(y) - f_j(y-1)$. 
Our results will depend on certain technical properties of the cost functions which we will make clear in the theorem statements.
For simplicity of presentation, we assume all $f_j$ are identical and omit their subscripts in the rest of this paper. 
Our results extend to non-identical $f_j$'s as well, with a dependency on the ``worst'' $f_j$.

Let $\SSS$ be the set of available bundles that can be allocated to buyers. A bundle $S$ of items is represented by a vector $(a_{1S}, \ldots, a_{mS})$ where $a_{jS}$ is the number of units of item $j$ in the bundle. 
We assume $\SSS$ contains the empty bundle $\vec{0} = (0, \ldots, 0)$ and there is an upper bound $\Delta y$ on the maximal number of units of each item in a bundle. 
Readers may think of $\Delta y = 1$ and $\SSS = \{0, 1\}^m$ as a concrete example, which corresponds to the standard setting of combinatorial auctions. 
In general, $\Delta y$ can be any positive number and $\SSS$ can be any subset of $[0, \Delta y]^m$ containing $\vec{0}$.
For simplicity of exposition, we assume $\SSS = \{0, \Delta y\}^m$ and say $j \in S$ if $a_{jS} = \Delta y$. 
Our analysis extends to the general case.

Each buyer $i$ has a private value function $v_i : \SSS \mapsto \mathbb{R}^+$ where $v_i(S)$ is buyer $i$'s value for getting bundle $S \in \SSS$ of items.
We do not require any assumptions on the value functions and allow them to have arbitrary complement and substitute effects.

At the beginning, the seller does not have any information about the buyers, that is, $v_i$'s and even $n$ are not known to the seller. 
Upon the arrival of each buyer $i$, the buyer reports a value function $\hat{v}_i$ (which may or may not be his true value function $v_i$) and the seller (irrevocably) allocates a bundle $S_i \in \SSS$ to the buyer and charge a payment $P_i$, based on $\hat{v}_i$ and the reported values and allocations of previous buyers.

The resulting allocation rule along with the payment rule constitute an online mechanism.
Since the seller does not know the buyers' value functions upfront, he needs to incentivize them to truthfully report their value functions.
A mechanism is incentive compatible if each buyer $i$ maximizes the expectation of his utility, i.e., $v_i(S_i) - P_i$, by reporting $\hat{v}_i = v_i$.
If the buyers are not strategic, i.e., they always truthfully report their value functions, then we only need the allocation rule which is simply an online algorithm. 

The objective is to allocate items to maximize the expectation of social welfare, which is the sum of the buyers' value for the bundles they get, less the total production cost, i.e., $\sum_i v_i(S_i) - \sum_j f(y_j)$ where $y_j$ denotes the total amount of item $j$ that has been sold so far.
We measure the performance of mechanisms/algorithms under the standard competitive analysis framework. 
Let $W(M)$ denote the expected objective value of a mechanism $M$.
Let $\opt$ denote the optimal objective value in hindsight.
A mechanism $M$ is $\alpha$-competitive if there exists a constant $\beta$, independent of $n$ and $v_i$'s, such that
\[ W(M) \ge \tfrac{1}{\alpha} \opt - \beta \]
for all possible instances. 
Clearly, $\alpha$ is always at least $1$, and the closer to $1$ the better.
Our goal is to characterize the optimal competitive ratio $\alpha$ achievable by any online mechanisms/algorithms.

\paragraph{Posted Pricing Mechanisms}

A particularly related family of mechanisms are posted pricing mechanisms.
Upon a buyer's arrival, the seller chooses item prices $p_j$ and lets the buyer pick his utility-maximizing bundle, namely, $\argmax_{S \in \SSS} v_i(S) - \sum_j a_{jS} p_j$, breaking ties arbitrarily.
The mechanism may use different prices for different buyers.
Posted pricing mechanisms are incentive compatible and are widely used for revenue maximization (e.g., \cite{BBM08, CHMS10, CHK09}). 
Recently, \citet{BGMS11} extended the use of posted pricing mechanisms to online combinatorial auctions with production costs; in particular, the price of an item $j$ only depends on the amount of the item that has been sold.
Thus, their mechanisms can be represented by a {\em pricing function} $p : \mathbb{R}^+ \mapsto \mathbb{R}^+$ where $p(y)$ is the price per unit of item $j$ if $y$ units of the item has been sold.
In this paper, we characterize the optimal competitive ratio achievable by any online algorithms and show that the optimal competitive ratio can be achieved by the posted pricing mechanisms.
See below for a formal description of the {\em posted pricing mechanism} $\MMM_p$ defined by a pricing function $p$:

\begin{algorithm}[H]
\caption{$\MMM_p$, pricing mechanism with pricing function $p$}\label{alg:pricing}
\begin{algorithmic}[1]
\State Initialize $y_j = 0$ for all $j$
\For{$i=1, \ldots, n$}
	\State Offer item $j$ at price $p_j = p(y_j)$ for all $j$
	\State Buyer $i$ chooses bundle $S$ and pays $\sum_{j\in S} p_j$
	\State Update $y_j = y_j + \Delta y$ for all $j\in S$
\EndFor
\end{algorithmic}
\end{algorithm}


\paragraph{Online Primal Dual Algorithms}

While our mechanisms are posted pricing mechanisms, we did not commit to them a priori. 
Instead, we derive our mechanisms from a principled primal dual analysis.
Consider the following convex program relaxation ($P$) of our problem, on the top, and its Fenchel dual program ($D$) (see, e.g.,  \citet{Dev10} for more discussions of Fenchel duality and \pref{app:deriv} for the derivation of the dual program):
\begin{align*}
\textstyle \max_{x, y} ~~ & \textstyle \sum_i \sum_S v_{iS} x_{iS} - \sum_j f(y_j) \\
\forall i: \quad & \textstyle \sum_S x_{iS} \leq 1 \\
\forall j: \quad & \textstyle \sum_i \sum_S a_{jS} x_{iS} \leq y_j\\ 
& x, y \geq 0\\
\end{align*}
\begin{align*}
\textstyle \min_{u, p} ~~ & \textstyle \sum_i u_i + \sum_j f^*(p_j) \\
\forall i, S: \quad & \textstyle u_i + \sum_j a_{jS} p_j \geq v_{iS} \\
& u, p \ge 0 
\end{align*}

In the primal program, variable $x_{iS}$ indicates whether or not buyer $i$ purchases bundle $S$. 
Since we want to maximize the objective function and $f$ is an increasing function, we may assume without loss that $y_j = \sum_i \sum_S a_{jS} x_{iS}$, namely, the total number of units of item $j$ that have been allocated.

In the dual objective, $f^*(p) = \sup_{y \ge 0} \{ p y - f(y) \}$ is the convex conjugate of $f$.
When $f$ is strictly convex and differentiable, $f'(y)$ and ${f^*}'(p)$ are inverses of each other.
We interpret $p_j$ as the price per unit of item $j$ and $u_i$ as the utility of buyer $i$.

When $a_{jS}$ are binary and $f$ is a step function that equals $0$ for $y \in [0,1]$ and $\infty$ for $y>1$, these programs become the standard primal and dual linear program relaxations for the combinatorial auctions, without production costs and with one copy per item (see \pref{app:singleunit}).

Upon the arrival of buyer $i$, there is a new dual variable $u_i$ and a set of new dual constraints, $u_i \ge v_{iS} - \sum_j a_{jS} p_j$ for all $S$. To maintain dual feasibility while minimizing the increase of the dual objective, we let $u_i = \min_S v_{iS} - \sum_j a_{jS} p_j$; 
by complementary slackness, we also let $x_{iS} = 1$. 
This allocation rule corresponds to letting buyer $i$ pick his utility maximizing bundle with $u_i$ equaling his utility.
After the allocation, $y_j$ increases for each item $j$ in the allocated bundle.
Consequently, we need to adjust the corresponding dual variables $p_j$. 
In the offline optimal solution, $y_j$ and $p_j$ shall form a complementary pair, i.e., $p_j = f'(y_j)$. 
In our online problem, however, the algorithm does not know the final demand and, therefore, let $p_j$ be $f'$'s value at some estimated final demand; in general, we let $p_j(y_j)$ be a function of the current demand $y_j$.

Let $P^i$ and $D^i$ be the primal and dual objective values, respectively, after serving buyer $i$; 
let $P^0$ and $D^0$ be the values at initialization and $P^n$ and $D^n$ be the values at termination.
Throughout the process, we maintain a feasible primal solution $(x, y)$ and a feasible dual solution $(u, p)$. 
We use superscript $i$ to denote the current values of the primal and dual variables after serving buyer $i$ and before serving buyer $i+1$.
If we could show that $P^n \ge \frac{1}{\alpha} D^n - \beta$, then our mechanism $M$ is $\alpha$-competitive since $W(M) = P^n \ge \frac{1}{\alpha} D^n - \beta \ge \frac{1}{\alpha} \opt - \beta$, where the last inequality follows from weak duality.
We call this the {\em global analysis} framework.
On the other hand, it also suffices to show that $P^{i+1} - P^i \ge \tfrac{1}{\alpha} (D^{i+1} - D^i)$ for all $i$; 
summing over all $i$ we have $P^n - P^0 \ge \tfrac{1}{\alpha} (D^n - D^0)$, or 
$P^n \ge \tfrac{1}{\alpha} D^n - \beta$ for $\beta = \tfrac{1}{\alpha} D^0 - P^0$.
We call this the {\em local analysis} framework.
We will use both frameworks in this paper.

\paragraph{Remarks on Notation}

When $\Delta y = 1$ and $\SSS = \{0, 1\}^m$, our setting becomes \citet{BGMS11}'s setting, except for the representation of production cost functions. In \citet{BGMS11}, there is an increasing marginal production cost function $c_j$ for each integral unit of item $j$, whereas our production cost functions $f_j$ are the cumulative version. If $f_j(y_j)$'s further take on value $0$ for $0 \le y_j \le k$, and $+\infty$ otherwise, our setting essentially becomes multi-unit combinatorial auctions with multi-minded buyers, as considered by \citet{BGN03} and, more recently, \citet{BG13}.

\section{Fractional Case}\label{sec:frac}

In this section, we consider the fractional case of our problem, that is, we assume $\Delta y$ is infinitesimally small and let there be infinitely many buyers each of which buys infinitesimal units of items. The fractional case allows us to focus on the online nature of the problem, while ignoring subtle treatments needed to handle the integrality gap of the convex programs.

For the fractional case, we are able to characterize the optimal competitive ratio achievable by online mechanisms/algorithms.
We show that finding a pricing function $p : \mathbb{R}^+ \mapsto \mathbb{R}^+$ that is a feasible solution to the following differential equation for some constant $\beta$ is a sufficient and necessary condition of the existence of $\alpha$-competitive online mechanisms/algorithms:
\begin{equation}
\label{eq:diffeq3}
\textstyle 
\int_0^y p(\bar{y}) d\bar{y} - f(y) \geq \frac{1}{\alpha} \cdot f^*(p(y)) - \beta, \textrm{ for all $y\geq 0$}.
\end{equation}


\begin{theorem}
\label{thm:fracub}
If a monotonically increasing pricing function $p$ satisfies \pref{eq:diffeq3} for some constant $\beta$, then the corresponding pricing mechanism $\MMM_p$ is $\alpha$-competitive and incentive compatible.
\end{theorem}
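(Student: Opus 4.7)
The plan is to invoke the global primal-dual framework from the preliminaries: as $\MMM_p$ runs, I would build the corresponding primal solution $(x,y)$ together with a matching dual solution $(u,p)$ whose feasibility I can certify, and then use the differential inequality \pref{eq:diffeq3} to compare the two objectives item-by-item. Incentive compatibility is separate and immediate: because the prices offered to buyer $i$ depend only on the sales history of previous buyers, $\MMM_p$ is a standard take-it-or-leave-it posted pricing mechanism, so reporting $\hat v_i = v_i$ is a dominant strategy regardless of the reports of others.

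For the competitive ratio, I would set the dual variables at termination to $p_j := p(y_j^n)$ and $u_i := v_i(S_i) - \sum_{j \in S_i} \Delta y \cdot p(y_j^{i-1})$, i.e., the utility buyer $i$ actually enjoyed. Non-negativity $u_i \geq 0$ is immediate since $\vec{0}\in\SSS$ is always an option. To verify $u_i + \sum_j a_{jS} p_j \geq v_i(S)$ for every bundle $S$, I would chain two inequalities: first use that $S_i$ maximizes utility at prices $p(y_j^{i-1})$, then use monotonicity of $p$ together with $y_j^n \geq y_j^{i-1}$ to replace at-arrival prices by final prices. In the fractional limit, the collected payments decompose by item as a Riemann sum that coincides with the integral, giving $\sum_i u_i = \sum_i v_i(S_i) - \sum_j \int_0^{y_j^n} p(\bar y)\, d\bar y$, and hence
\[
P^n \;=\; \sum_i u_i + \sum_j \Big(\int_0^{y_j^n} p(\bar y)\, d\bar y - f(y_j^n)\Big).
\]
Applying \pref{eq:diffeq3} pointwise at $y = y_j^n$, together with $u_i \geq 0$ and $\alpha \geq 1$, yields $P^n \geq \tfrac{1}{\alpha}\big(\sum_i u_i + \sum_j f^*(p_j)\big) - m\beta = \tfrac{1}{\alpha} D^n - m\beta$, and weak duality $D^n \geq \opt$ closes the bound with additive constant $m\beta$, independent of $n$ and the $v_i$'s.

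The most delicate step I anticipate is the dual-feasibility verification: the dual constraints involve the final prices $p(y_j^n)$, while buyer $i$ actually faced the smaller prices $p(y_j^{i-1})$. Bridging this gap is exactly where monotonicity of $p$ enters, and explains why monotonicity had to be built into the hypothesis of the theorem. The Riemann-sum-to-integral identification and the non-negativity of $f^*(p_j)$ (which holds whenever $f(0) = 0$) are routine facts that I would flag rather than dwell on.
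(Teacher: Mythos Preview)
Your proposal is correct and follows essentially the same global primal-dual argument as the paper: rewrite $P^n$ as $\sum_i u_i + \sum_j\big(\int_0^{y_j} p - f(y_j)\big)$ via the payment-as-integral identity, invoke \pref{eq:diffeq3} per item, and use $u_i\ge 0$ with $\alpha\ge 1$ and weak duality. You actually supply more detail than the paper does on the dual-feasibility step (the paper simply asserts feasibility), and your monotonicity-based verification of that step is exactly right.
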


\begin{theorem}
\label{thm:fraclb}
If there is an $\alpha$-competitive algorithm, then there exists a monotonically increasing pricing function $p$ that satisfies \pref{eq:diffeq3} for some constant $\beta$.
\end{theorem}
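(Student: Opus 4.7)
The plan is to derive a monotone pricing function $p$ from the behavior of the hypothesized $\alpha$-competitive algorithm $\mathcal{A}$ on single-item instances, and then verify $p$ satisfies the integral inequality \pref{eq:diffeq3} by analyzing $\mathcal{A}$'s welfare on a carefully chosen two-phase adversarial instance and comparing it to the offline optimum.

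First I would reduce to the single-item setting ($m = 1$), since any $\alpha$-competitive $\mathcal{A}$ for the general problem remains $\alpha$-competitive when restricted to such instances. For each $v \geq 0$, probe $\mathcal{A}$ with a long ``constant-valuation'' stream $I_v$ consisting of infinitesimal buyers all at per-unit value $v$, and let $\hat y(v)$ denote the expected total amount sold. Define
\[
p(y) \;:=\; \inf\bigl\{v \geq 0 : \hat y(v) \geq y\bigr\},
\]
which is monotone non-decreasing by construction. Intuitively, $p(y)$ is the threshold value needed to drive $\mathcal{A}$'s cumulative expected allocation up to $y$ on a constant-value stream.

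Next, for each $y^* > 0$ and small $\delta > 0$, consider the two-phase adversarial instance $J^{y^*,\delta}$: first a \emph{probing phase} of infinitesimal buyers whose per-unit values ramp through $p(\bar y)$ as $\bar y$ increases from $0$ to $y^*$, and then a \emph{surge phase} consisting of an overwhelmingly large number of infinitesimal buyers all at per-unit value $p(y^*) + \delta$. I would argue that (i) $\mathcal{A}$ accepts essentially every probing buyer, because each arrives at exactly its effective acceptance threshold $p(\bar y)$---otherwise one can construct a constant-value stream on which $\mathcal{A}$ contradicts $\alpha$-competitiveness---so the probing phase ends at allocation $y^*$ with welfare $\int_0^{y^*} p(\bar y)\,d\bar y - f(y^*)$; and (ii) during the surge phase, $\mathcal{A}$ can push its allocation at most to $y^* + \Delta(\delta)$ with $\Delta(\delta) \to 0$ as $\delta \to 0$, adding only $O(\delta)$ extra welfare. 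Meanwhile $\opt(J^{y^*,\delta}) \geq f^*(p(y^*) + \delta)$, attained by ignoring the probing phase and buying the optimal $(f^*)'(p(y^*) + \delta)$ units entirely from the surge.

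Combining these estimates with $\alpha$-competitiveness on $J^{y^*,\delta}$ and sending $\delta \downarrow 0$, using continuity of $f^*$ (which follows from strict convexity and differentiability of $f$), yields $\int_0^{y^*} p(\bar y)\,d\bar y - f(y^*) \geq \tfrac{1}{\alpha}\, f^*(p(y^*)) - \beta$, i.e., \pref{eq:diffeq3} at $y = y^*$; since $y^*$ was arbitrary, $p$ is globally feasible. The main obstacle is rigorously justifying the probing-phase claim: I need to tie $\mathcal{A}$'s pointwise acceptance decisions on a slow ramp of values to the globally defined threshold $p$. For deterministic $\mathcal{A}$ this follows from an exchange/limit argument in the infinitesimal regime, and for randomized $\mathcal{A}$ it requires a Yao-style averaging together with a concentration bound that uses the length of the probing phase and strict convexity of $f$ to smooth out variance in the realized allocations. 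A secondary technicality is handling flat regions (or jumps) of $p$, which I would address by working with the right-continuous monotone envelope of $\hat y^{-1}$ and perturbing $\delta > 0$ before taking the limit.
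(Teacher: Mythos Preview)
The core gap is in claims (i) and (ii): you assume that $\mathcal{A}$'s behavior on the two-phase instance $J^{y^*,\delta}$ is governed by the threshold function $p$ you extracted from \emph{constant-value} streams, but an arbitrary $\alpha$-competitive online algorithm can be fully history-dependent. The instances $I_v$ for different $v$ are not prefixes of one another, so knowing that $\mathcal{A}$ sells $\hat y(v)$ on $I_v$ says nothing about what $\mathcal{A}$ does after it has seen a ramp of values. Concretely, nothing prevents $\mathcal{A}$ from \emph{rejecting} every probing-phase buyer and then selling ${f^*}'(p(y^*)+\delta)$ units in the surge, achieving welfare arbitrarily close to $f^*(p(y^*)+\delta)=\opt$; in that case your upper bound $W(\mathcal{A})\le \int_0^{y^*} p - f(y^*)+O(\delta)$ fails and no inequality relating $\int_0^{y^*} p - f(y^*)$ to $\tfrac{1}{\alpha}f^*(p(y^*))$ follows. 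Your proposed fixes (exchange/limit arguments, Yao plus concentration) address randomness and infinitesimal limits, not history-dependence, so they do not close this gap.

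The paper sidesteps the issue by using a single \emph{nested} family of ramping instances $\{I_{v^*}\}$: values increase continuously from $0$ to $v^*$, with demand ${f^*}'(v)$ at stage $v$. Because $I_{v}$ is literally a prefix of $I_{v^*}$ for every $v\le v^*$, the algorithm's expected cumulative allocation $y(v)$ up to stage $v$ is well-defined independently of $v^*$. Comparing $\mathcal{A}$'s welfare (at most $\int_0^{v^*} v\,dy(v)-f(y(v^*))$ by convexity of $f$) to $\opt(I_{v^*})=f^*(v^*)$ then directly yields the ``inverse'' inequality $\int_0^{v^*} v\,dy(v)-f(y(v^*))\ge\tfrac{1}{\alpha}f^*(v^*)-\beta$ for all $v^*$; the remaining work is to pass to the pointwise-minimal feasible $\underline y$, show it is strictly increasing, and take its inverse as $p$. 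The nesting of instances is the key idea you are missing; once you adopt it, the surge phase and the attempted coupling to constant-value streams become unnecessary.
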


We note that the integral version of \pref{eq:diffeq3} is equivalent to the Structural Lemma by \citet{BGMS11} and, perhaps not surprisingly, our mechanisms are posted pricing mechanisms. 

Further, we define 
\[
 \alpha(f) = \inf \big\{ \alpha : \textrm{there exist constant $\beta$ and monotonically increasing $p$ so that \pref{eq:diffeq3} holds} \big\}.
\]


\begin{corollary}
For any $\epsilon > 0$, there is an $(\alpha(f) + \epsilon)$-competitive and incentive compatible mechanism.
\end{corollary}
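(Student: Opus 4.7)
The plan is to derive this corollary as an immediate consequence of \pref{thm:fracub} combined with the definition of $\alpha(f)$ as an infimum. Fix $\epsilon > 0$. By the definition of infimum, the set
\[
\big\{ \alpha : \textrm{there exist constant } \beta \textrm{ and monotonically increasing } p \textrm{ so that \pref{eq:diffeq3} holds} \big\}
\]
contains some element $\alpha^\star$ satisfying $\alpha(f) \le \alpha^\star < \alpha(f) + \epsilon$, provided the set is nonempty (which is the nontrivial case; if $\alpha(f) = +\infty$ the statement is vacuous). Pick such an $\alpha^\star$ together with a witnessing monotonically increasing pricing function $p$ and constant $\beta$ for which \pref{eq:diffeq3} is satisfied with parameter $\alpha^\star$.

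Next, I would apply \pref{thm:fracub} to this triple $(p, \beta, \alpha^\star)$. The theorem directly yields that the posted pricing mechanism $\MMM_p$ is $\alpha^\star$-competitive and incentive compatible. Since $\alpha^\star \le \alpha(f) + \epsilon$, this same mechanism certifies the weaker bound: for every instance,
\[
W(\MMM_p) \ge \tfrac{1}{\alpha^\star} \opt - \beta \ge \tfrac{1}{\alpha(f) + \epsilon} \opt - \beta,
\]
so $\MMM_p$ is $(\alpha(f) + \epsilon)$-competitive and incentive compatible, as desired.

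There is no serious obstacle; the argument is essentially a one-liner unwrapping of the infimum. The only subtlety worth flagging is that the constant $\beta$ produced by the differential equation depends only on $f$ and $p$ and is therefore independent of $n$ and the buyers' value functions, matching the requirement in the definition of competitive ratio given in \pref{sec:prelim}.
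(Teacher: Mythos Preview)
Your argument is correct and is exactly the intended derivation: the paper states this corollary without proof because it follows immediately from \pref{thm:fracub} together with the definition of $\alpha(f)$ as an infimum, which is precisely what you unwrap. One minor point: in the proof of \pref{thm:fracub} the additive loss is actually $m\beta$ rather than $\beta$, but this is still independent of $n$ and the $v_i$'s, so your final remark about the constant is unaffected.
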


\begin{corollary}
For any $\epsilon > 0$, there are no $(\alpha(f) - \epsilon)$-competitive algorithms.
\end{corollary}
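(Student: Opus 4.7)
The plan is to derive this corollary as an immediate contrapositive consequence of \pref{thm:fraclb}, combined with the definition of $\alpha(f)$ as an infimum. No new constructions or estimates are required; the entire content has already been packaged into \pref{thm:fraclb}.

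Concretely, I would argue by contradiction. Fix $\epsilon > 0$ and suppose toward a contradiction that some online algorithm $\AAA$ is $(\alpha(f) - \epsilon)$-competitive for the fractional problem. Setting $\alpha := \alpha(f) - \epsilon$ and applying \pref{thm:fraclb} to $\AAA$ produces a monotonically increasing pricing function $p : \mathbb{R}^+ \to \mathbb{R}^+$ together with a constant $\beta$ such that
\[
\int_0^y p(\bar{y})\, d\bar{y} - f(y) \;\geq\; \tfrac{1}{\alpha(f) - \epsilon}\, f^*(p(y)) - \beta
\qquad\text{for all } y \geq 0,
\]
i.e., a feasible solution to \pref{eq:diffeq3} at parameter $\alpha(f) - \epsilon$. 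But then $\alpha(f) - \epsilon$ belongs to the set
\(
\bigl\{ \alpha : \text{there exist } \beta \text{ and monotone } p \text{ with \pref{eq:diffeq3}} \bigr\},
\)
whose infimum by definition equals $\alpha(f)$. Since $\alpha(f) - \epsilon < \alpha(f)$, this contradicts the infimum property and establishes the claim.

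The only step with any subtlety is the invocation of \pref{thm:fraclb}: one must check that its conclusion indeed yields a $(p,\beta)$ at the exact parameter $\alpha(f) - \epsilon$ (not at some slightly larger value), so that the strict inequality $\alpha(f) - \epsilon < \alpha(f)$ produces the desired contradiction. As \pref{thm:fraclb} is stated verbatim for the competitive ratio of the assumed algorithm, this matches without loss. The truly hard work — the lower bound family of instances encoding the differential inequality — lives inside \pref{thm:fraclb} itself, and not in this corollary.
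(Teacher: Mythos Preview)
Your argument is correct and is exactly the intended derivation: the paper states this as an immediate corollary of \pref{thm:fraclb} together with the definition of $\alpha(f)$ as an infimum, and your contrapositive spells out precisely that step.
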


We stress that our upper bound holds for incentive compatible mechanisms, while our lower bound holds for arbitrary algorithms. In this sense, the incentive compatibility constraint does not impose any additional difficulties for online combinatorial auctions with production costs. We note that the integral analogue of \pref{thm:fracub} is equivalent to the Structural Lemma in \citet{BGMS11}.

We present the proofs of \pref{thm:fracub} and \pref{thm:fraclb} in \pref{sec:fracub} and \pref{sec:fraclb}, respectively, and characterize $\alpha(f)$ for some specific production cost functions in \pref{sec:fraccase}.

\subsection{Mechanism (Proof of \pref{thm:fracub})}
\label{sec:fracub}



For notational simplicity, let $S_i$ be the utility-maximizing bundle that buyer $i$ purchases and $v_i$ be the value $v_{iS_i}$. 
By the definition of pricing mechanisms, we have $u_i = v_i - \sum_{j\in S_i} p_j^{i-1} \cdot \Delta y$. 
(Recall that we assume $a_{jS} \in \{0, \Delta y\}$.)

By the definition of the convex programs, we have feasible solutions and $P^n = \sum_i v_i - \sum_j f(y_j)$ and $D^n = \sum_i u_i + \sum_j f^*(p_j)$ where $y_j$ and $p_j$ are the final demand and price for item $j$. 
To lower bound $P^n$ by the $\tfrac{1}{\alpha}$ fraction of $D^n$, we first rewrite $\sum_i v_i$ as a function of variables $y_j$, $u_i$, and $p_j$:

\begin{lemma}
$\sum_i v_i = \sum_i u_i + \sum_j \int_0^{y_j} p(y) dy$.
\end{lemma}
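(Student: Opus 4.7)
The plan is to derive the identity directly from the definitions of the pricing mechanism and the fact that $\Delta y$ is infinitesimal in the fractional setting.

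First I would rewrite the utility definition. By construction of $\MMM_p$, buyer $i$ selects the utility-maximizing bundle $S_i$ at the posted prices, so
\[
u_i \;=\; v_i - \sum_{j \in S_i} p_j^{i-1} \cdot \Delta y,
\]
where $p_j^{i-1} = p(y_j^{i-1})$ is the price quoted to buyer $i$ and $y_j^{i-1}$ is the amount of item $j$ sold before buyer $i$ arrives. Solving for $v_i$ and summing over all buyers yields
\[
\sum_i v_i \;=\; \sum_i u_i \;+\; \sum_i \sum_{j \in S_i} p(y_j^{i-1}) \cdot \Delta y.
\]

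Next I would swap the order of summation in the second term. Fixing an item $j$, the inner contribution is a sum over exactly those buyers who received a unit of $j$; after the update rule $y_j \gets y_j + \Delta y$, these buyers see the sequence of prices $p(0), p(\Delta y), p(2\Delta y), \ldots, p(y_j - \Delta y)$, where $y_j$ is the final total demand for item $j$. Thus
\[
\sum_i \sum_{j \in S_i} p(y_j^{i-1}) \cdot \Delta y \;=\; \sum_j \sum_{k=0}^{y_j/\Delta y - 1} p(k \, \Delta y) \cdot \Delta y.
\]

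The final step is to invoke the fractional assumption that $\Delta y$ is infinitesimal: the inner sum is exactly a left-endpoint Riemann sum for the monotonic pricing function $p$ on $[0, y_j]$, so it equals $\int_0^{y_j} p(\bar y)\, d\bar y$. Substituting gives the claimed identity. There is really no obstacle here beyond being careful about the bookkeeping of the ``before'' price $p_j^{i-1}$ versus the ``after'' demand $y_j$; the only place any analytic content is used is the passage from Riemann sum to integral, which is immediate in the fractional model stipulated at the start of \pref{sec:frac}.
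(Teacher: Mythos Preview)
Your proof is correct and follows essentially the same approach as the paper: start from $u_i = v_i - \sum_{j\in S_i} p_j^{i-1}\,\Delta y$, sum over buyers, swap the order of summation so that the total payment is organized item by item as a Riemann sum, and pass to the integral using the infinitesimal-$\Delta y$ assumption. The only cosmetic difference is that the paper introduces an explicit indicator $\chi_j(i,y)$ to carry out the reindexing, whereas you describe the price sequence $p(0),p(\Delta y),\dots,p(y_j-\Delta y)$ directly.
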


\begin{proof}
Let $\chi_j(i,y)$ be an indicator function that is equal to $1$ if buyer $i$ buys from the $y$-th to the $(y + \Delta y)$-th units of item $j$ via bundle $S_i$. 
The buyer $i$'s utility is $u_i = v_i - \sum_j \sum_{y \in \Delta y \cdot \mathbb{N}} \chi_j(i,y) \Delta y \cdot p(y)$, where $y$ ranges over all the nonnegative integer multiples of $\Delta y$. Then, 
\[ \textstyle \sum_i v_i = \sum_i u_i + \sum_i \sum_j \sum_y \chi_j(i,y) \cdot \Delta y \cdot p(y).  \]
Next, we change the order of summation and account for the second term, i.e., the total payment, in a different way:
\begin{align*}
\textstyle \sum_{i,j,y} \chi_j(i,y) \cdot \Delta y \cdot p(y) &= \textstyle \sum_{j,y} \Delta y \cdot p(y) \sum_i \chi_j(i,y) \\
	& = \textstyle \sum_{j, y \leq y_j} \Delta y \cdot p(y).
\end{align*}
As we assume $\Delta y$ to be infinitesimally small in the fractional case, the above reduces to $\sum_j \int_0^{y_j} p(y) dy$. The lemma follows.
\end{proof}


Given the lemma, the primal objective is $P^n = \sum_i u_i + \sum_j \int_0^{y_j} p(y) dy - \sum_j f(y_j)$. 
We now show that $P^n \geq \frac{1}{\alpha} \cdot D^n - m\beta$, then the competitive ratio follows from the weak duality property.
This is equivalent to 
\[
\textstyle \sum_i u_i + \sum_j \int_0^{y_j} p(y)dy - \sum_j f(y_j) \geq \frac{1}{\alpha} \big( \sum_i u_i + \sum_j f^*(p_j) \big) - m\beta.
\]

Note that $\alpha \ge 1$ and $\sum_i u_i \ge 0$. Having positive $\sum_i u_i$ only helps the inequality. So, it suffices to show $\sum_j \int_0^{y_j} p(y) dy - \sum_j f(y_j) \geq \frac{1}{\alpha} \sum_j f^*(p_j) - m\beta$, which follows by summing up \pref{eq:diffeq3} over all items.

\subsection{Lower Bound (Proof of \pref{thm:fraclb})}
\label{sec:fraclb}

We consider a family of single-item instances parameterized by $v^* \ge 0$, $\{I_{v^*}\}_{v^* \geq 0}$, and show that if there is an online algorithm that is $\alpha$-competitive for all instances in the family, then we can construct a monotonically increasing feasible solution to the differential equation \pref{eq:diffeq3}.

The instance $I_{v^*}$ is defined as follows:
let there be a continuum of stages parameterized by $v$ starting from stage $0$ to stage $v^*$;
at stage $v$, let there be a continuum of buyers with value $v$ per unit of the item and a total demand of ${f^*}'(v)$.
Since ${f^*}'$ and $f'$ are inverses,  ${f^*}'(v)$ is the maximal amount of the item that can be produced at a marginal production cost of at most $v$ per unit.

Consider any online algorithm.
Let $Y(v, v^*)$ be a random variable denoting the amount of the item sold up to stage $v \le v^*$ in instance $I_{v^*}$.
Let $y(v, v^*)$ be the expected value of $Y(v, v^*)$ over the randomness of the algorithm.
Note that when the algorithm decides the allocation for buyers at stage $v$, it does not have any information about $v^*$ other than that $v^* \ge v$. 
Hence, the distribution of random variable $Y(v, v^*)$ is independent of the value of $v^*$ for any $v^* \ge v$.
We will omit the second argument and simply write $Y(v)$ and $y(v)$ in the rest of the proof.

We first show that if there is a competitive algorithm, then there is a feasible solution to the ``inverse'' of the differential equation \pref{eq:diffeq3}:

\begin{lemma}
\label{lem:fraclbinv}
If there is an $\alpha$-competitive algorithm for all instances $I_{v^*}$, $v^* \ge 0$, then there is a constant $\beta$ and a function $y(v)$ such that:
\begin{equation}
\label{eq:fracinverse}
\textstyle \int_0^{v^*} v d y(v) - f(y(v^*)) \ge \tfrac{1}{\alpha} f^*(v^*) - \beta, \textrm{ for } v^* \ge 0.
\end{equation}
\end{lemma}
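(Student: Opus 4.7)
The plan is to let $y(v)$ denote the given algorithm's expected cumulative sales through stage $v$ (as already set up in the paragraph preceding the lemma), lower bound the left-hand side of \pref{eq:fracinverse} by the algorithm's expected welfare on $I_{v^*}$, and then combine with $\alpha$-competitiveness and an explicit lower bound on $\opt(I_{v^*})$.

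First, I would lower-bound $\opt(I_{v^*}) \ge f^*(v^*)$ by exhibiting a concrete feasible offline strategy on $I_{v^*}$: produce exactly ${f^*}'(v^*)$ units of the item and allocate them to the continuum of buyers arriving at stage $v^*$, which by the construction of $I_{v^*}$ has total demand ${f^*}'(v^*)$ at per-unit value $v^*$. The welfare of this strategy equals $v^* \cdot {f^*}'(v^*) - f({f^*}'(v^*))$, which is exactly $f^*(v^*)$ by the Fenchel identity, since the supremum defining $f^*(v^*) = \sup_y\{v^* y - f(y)\}$ is attained at $y = {f^*}'(v^*)$.

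Next, I would rewrite the algorithm's expected welfare on $I_{v^*}$ in terms of $y(v)$. In a single realization, the value collected from buyers is the Stieltjes integral $\int_0^{v^*} v\, dY(v)$ and the production cost is $f(Y(v^*))$. Taking expectations, Fubini gives $\Ex\!\big[\int_0^{v^*} v\, dY(v)\big] = \int_0^{v^*} v\, dy(v)$, while Jensen's inequality applied to the convex function $f$ yields $\Ex[f(Y(v^*))] \ge f(\Ex[Y(v^*)]) = f(y(v^*))$. Subtracting gives $\int_0^{v^*} v\, dy(v) - f(y(v^*)) \ge \Ex[W(I_{v^*})]$. Applying $\alpha$-competitiveness then yields $\Ex[W(I_{v^*})] \ge \tfrac{1}{\alpha}\,\opt(I_{v^*}) - \beta' \ge \tfrac{1}{\alpha}\, f^*(v^*) - \beta'$ for the universal constant $\beta'$ from the competitiveness definition, and chaining the two displays proves \pref{eq:fracinverse} with $\beta = \beta'$.

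The only conceptually non-routine step is the Jensen-based ``derandomization'' of the production cost: without it, the bound would involve the random quantity $\Ex[f(Y(v^*))]$ rather than the deterministic $f(y(v^*))$ that the differential equation \pref{eq:diffeq3} uses, so convexity of $f$---the same property that lets $f^*$ be defined---is exploited precisely where it is needed. The rest is bookkeeping; one should still check that $Y(\cdot)$ being a nondecreasing, almost surely bounded process (at most ${f^*}'(v^*)$ units are ever sold on $I_{v^*}$) justifies the interchange of expectation with Stieltjes integration, and one should record that the resulting $y(v) = \Ex[Y(v)]$ inherits monotonicity from $Y(\cdot)$ so that the ``inverse'' pricing function used later in \pref{thm:fraclb} is well-defined.
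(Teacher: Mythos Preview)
Your proposal is correct and follows essentially the same route as the paper's own proof: both compute $\opt(I_{v^*})$ via the offline strategy that sells ${f^*}'(v^*)$ units at stage $v^*$, then upper-bound the algorithm's expected objective by $\int_0^{v^*} v\,dy(v) - f(y(v^*))$ using linearity for the revenue term and Jensen (convexity of $f$) for the cost term, and finally invoke $\alpha$-competitiveness. The only cosmetic difference is that the paper asserts $\opt(I_{v^*}) = f^*(v^*)$ while you only claim the inequality $\opt(I_{v^*}) \ge f^*(v^*)$, which is all that is needed here.
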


\begin{proof}
In instance $I_{v^*}$, the optimal offline solution allocates ${f^*}'(v^*)$ units of the item to the buyers in the last stage and none to the buyers in previous stages. The optimal objective value of social welfare less the total production cost is
\[ \opt(v^*) = v^* \cdot {f^*}'(v^*) - f({f^*}'(v^*)) = f^*(v^*), \]
where the last equality follows from the definition of the convex conjugate function $f^*$ and properties of the complementary pair $v^*$ and ${f^*}'(v^*)$.

On the other hand, the objective value achieved by the algorithm is $\int_0^{v^*} v d Y(v) - f(Y(v^*))$.
By the linearity of the first term and convexity of $f$, the expected objective value of the algorithm is at least:
\[ \Ex[\textstyle \alg(v^*)] \leq \int_0^{v^*} v d y(v) - f(y(v^*)). \]

Therefore, if the algorithm is $\alpha$-competitive, then there exists $\beta$ such that for any $v^* \ge 0$, $\Ex[ \alg(v^*)] \ge \tfrac{1}{\alpha} \opt(v^*) - \beta$. 
Subsequently, there exists $y(v)$ that is a feasible solution to \pref{eq:fracinverse}.
\end{proof}

The differential equation \pref{eq:fracinverse} is essentially the same as \pref{eq:diffeq3}, except that a solution to \pref{eq:diffeq3} is $p(y)$ as a function of $y$ while a solution to \pref{eq:fracinverse} is $y(p)$ as a function of $p$.
In particular, if there is a feasible solution $y(p)$ to \pref{eq:fracinverse} that is strictly monotone, then its inverse function would be a feasible solution to \pref{eq:diffeq3} that is monotonically increasing.
The rest of this subsection is devoted to constructing a strictly monotone feasible solution.

Suppose the differential equation \pref{eq:fracinverse} is feasible for some $\alpha$ and $\beta$. For the same $\alpha$ and $\beta$ values, we let 
\[ \textstyle \underline{y}(v) = \inf \big\{ y(v) : \textrm{$y$ is feasible for \pref{eq:fracinverse}} \big\}. \]

\begin{lemma}
\label{lem:fraclbinv2}
\pref{eq:fracinverse} holds with equality for $\underline{y}$.
\end{lemma}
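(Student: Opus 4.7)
The plan is first to show that $\underline{y}$ is itself feasible for \pref{eq:fracinverse}, and then to derive a contradiction from assuming strict slack at some point $v_0$. I expect the main technical obstacle to be designing a local perturbation of $\underline{y}$ that is strictly smaller at $v_0$ while remaining both non-decreasing and feasible at every $v^*$.

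\textbf{Feasibility of $\underline{y}$.} The key observation is that the pointwise minimum $y=\min(y_1,y_2)$ of two feasible non-decreasing functions is itself feasible: $y$ stays non-decreasing, and using $\int_0^{v^*}v\,dy(v)=v^*y(v^*)-\int_0^{v^*}y(v)\,dv$, at each $v^*$ we have $y(v^*)\in\{y_1(v^*),y_2(v^*)\}$, while $\int_0^{v^*}y\le\int_0^{v^*}y_k$ for the matching index $k$, and $f(y(v^*))=f(y_k(v^*))$; so the LHS of \pref{eq:fracinverse} for $y$ dominates that for $y_k$, hence the RHS. Iterating this min-construction over a countable family of feasible functions whose values at each rational point converge down to $\underline{y}$ and then passing to the monotone limit (justified by monotone convergence on the $\int y\,dv$ term) yields a feasible function equal to $\underline{y}$ at every rational point; replacing $\underline{y}$ by its right-continuous envelope, which stays feasible by the same limit argument, we may assume $\underline{y}$ itself is feasible and right-continuous.

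\textbf{Equality by contradiction.} Suppose $\gamma:=\text{LHS}(\underline{y};v_0)-\text{RHS}(v_0)>0$ at some $v_0>0$. Since $f,f^*$ are continuous and $\underline{y}$ is right-continuous and non-decreasing, both sides of \pref{eq:fracinverse} are right-continuous in $v^*$, so the slack persists on some $[v_0,v_0+\epsilon]$; by further shrinking $\epsilon$ we may additionally arrange that $\text{RHS}(v_0+\epsilon)-\text{RHS}(v_0)\le\gamma/3$. For $\delta>0$ to be chosen, let $c:=\underline{y}(v_0)-\delta$ and define
\[
\tilde{y}(v):=\min(\underline{y}(v),c) \text{ for } v\le v_0+\epsilon, \qquad \tilde{y}(v):=\underline{y}(v) \text{ for } v>v_0+\epsilon.
\]
Then $\tilde{y}$ is non-decreasing (since $c\le\underline{y}(v_0+\epsilon)$), and $\tilde{y}(v_0)=c<\underline{y}(v_0)$; feasibility of $\tilde{y}$ would therefore contradict $\underline{y}(v_0)=\inf$.

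\textbf{Verifying feasibility of $\tilde{y}$.} For $v^*>v_0+\epsilon$, integration by parts gives $\int_0^{v^*}v\,d\tilde{y}-\int_0^{v^*}v\,d\underline{y}=\int_0^{v^*}(\underline{y}-\tilde{y})\,dv\ge 0$ while $f(\tilde{y}(v^*))=f(\underline{y}(v^*))$, so $\tilde{y}$'s LHS dominates $\underline{y}$'s and feasibility is inherited. For $v^*\le v_L:=\inf\{v:\underline{y}(v)>c\}$, $\tilde{y}$ and $\underline{y}$ agree on $[0,v^*]$. The delicate case is $v^*\in[v_L,v_0+\epsilon]$, where $\tilde{y}$'s LHS is constant in $v^*$ (the cap freezes the Stieltjes integral and $f(\tilde{y}(v^*))=f(c)$) while the RHS is non-decreasing; this is the main obstacle. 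A short computation using integration by parts shows this constant value equals $v_L c-\int_0^{v_L}\underline{y}(v)\,dv-f(c)$, which tends to $\text{LHS}(\underline{y};v_0)\ge\text{RHS}(v_0)+\gamma$ as $\delta\to 0$ (since $v_L\to v_0$ and $c\to\underline{y}(v_0)$ by right-continuity of $\underline{y}$). Combined with $\text{RHS}(v_0+\epsilon)\le\text{RHS}(v_0)+\gamma/3$, this makes $\tilde{y}$'s LHS exceed $\text{RHS}(v_0+\epsilon)$ once $\delta$ is small, establishing feasibility on the whole range and completing the contradiction.
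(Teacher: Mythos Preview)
Your proposal is correct, but the route you take for the feasibility half is considerably more elaborate than the paper's. The paper argues directly: for each fixed $v^*$ pick any feasible $\tilde{y}$ with $\tilde{y}(v^*)\le\underline{y}(v^*)+\epsilon$; since $\tilde{y}\ge\underline{y}$ pointwise by the very definition of the infimum, integration by parts gives $\int_0^{v^*}v\,d\tilde{y}\le\int_0^{v^*}v\,d\underline{y}+v^*\epsilon$, and monotonicity of $f$ gives $f(\underline{y}(v^*))\le f(\tilde{y}(v^*))$, so feasibility of $\underline{y}$ at $v^*$ follows by letting $\epsilon\to 0$. This bypasses the min-closure, countable-family, monotone-limit, and right-continuous-envelope steps entirely, and does not need the feasible competitors to be non-decreasing (which your min-closure step does assume).

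For the equality half, the paper contents itself with a one-line sketch (``otherwise we could further lower $\underline{y}$''), while you actually carry out the perturbation; your construction is sound. One small point: the justification ``$v_L\to v_0$ by right-continuity'' can fail when $\underline{y}$ is flat on an interval ending at $v_0$, but the conclusion you need survives anyway, since on any such flat stretch $v_L c-\int_0^{v_L}\underline{y}-f(c)$ and $v_0\,\underline{y}(v_0)-\int_0^{v_0}\underline{y}-f(\underline{y}(v_0))$ coincide in the limit $\delta\to 0$. So what your approach buys is a fully rigorous equality step; what the paper's approach buys is a two-line feasibility proof.
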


\begin{proof}
For a given $v^* \ge 0$, by the definition of $\underline{y}(v^*)$, there exists a feasible solution $\tilde{y}$ to \pref{eq:fracinverse} that takes value at most $\underline{y}(v^*) + \epsilon$ at $v^*$ for an arbitrarily small $\epsilon > 0$. In particular,
\begin{equation}
\label{eq:fracinvproof1}
\textstyle \int_0^{v^*} v d \tilde{y}(v) - f(\tilde{y}(v^*)) \ge \tfrac{1}{\alpha} f^*(v^*) - \beta.
\end{equation}
Note that $\int_0^{v^*} v d \tilde{y}(v) = v^* \tilde{y}(v^*) - \int_0^{v^*} \tilde{y}(v) dv$. 
By the definition of $\tilde{y}(v^*)$, the first term is at most $v^* \underline{y}(v^*) + v^* \epsilon$. 
Further, by the definition of $\underline{y}$, we have $\tilde{y}(v) \ge \underline{y}(v)$ for all $v$.
So, we have
\begin{align*}
\textstyle \int_0^{v^*} v d \tilde{y}(v) & \textstyle \le v^* \underline{y}(v^*) - \int_0^{v^*} \underline{y}(v) dv + v^* \epsilon \\
	&\textstyle = \int_0^{v^*} v d \underline{y}(v) + v^* \epsilon.
\end{align*}
Putting the above back to \pref{eq:fracinvproof1} and using the fact that $f(\underline{y}(v^*)) \leq f(\tilde{y}(v^*))$, we have 
\[
\textstyle \int_0^{v^*} v d \underline{y}(v) + v^* \epsilon - f(\underline{y}(v^*)) \ge \tfrac{1}{\alpha} f^*(v^*) - \beta \enspace.
\]
As it holds for arbitrarily small $\epsilon > 0$, it also holds for $\epsilon = 0$ in the limit. 
It follows that $\underline{y}$ is a feasible solution to the differential equation \pref{eq:fracinverse}. 
Further, \pref{eq:fracinverse} must hold with equality because otherwise we could further lower the value of $\underline{y}$ while maintaining feasibility, contradicting our choice of $\underline{y}$.  
\end{proof}

\begin{lemma}
$\underline{y}(v)$ is strictly increasing in $v$.
\end{lemma}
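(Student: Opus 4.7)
The plan is to leverage the tight equality established in \pref{lem:fraclbinv2},
\[
\int_0^{v^*} v \, d\underline{y}(v) - f(\underline{y}(v^*)) = \tfrac{1}{\alpha} f^*(v^*) - \beta,
\]
together with a monotonicity argument. First I would observe that $\underline{y}$ is non-decreasing. Every feasible $y$ in \pref{eq:fracinverse} arises as the expected cumulative allocation $y(v) = \Ex[Y(v)]$ of an online algorithm, and is therefore non-decreasing in $v$. The pointwise infimum of a family of non-decreasing functions is itself non-decreasing: for $v_1 < v_2$ and any feasible $y$ we have $\underline{y}(v_1) \le y(v_1) \le y(v_2)$, and taking $\inf_y$ on the right yields $\underline{y}(v_1) \le \underline{y}(v_2)$.

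Next I would argue by contradiction. If $\underline{y}$ were not strictly increasing, monotonicity would force it to be constant on some interval $[v_1, v_2]$ with $v_1 < v_2$. On such a plateau the left-hand side of the saturated equation is constant in $v^*$: the Lebesgue--Stieltjes measure $d\underline{y}$ vanishes there, and $f(\underline{y}(v^*))$ does not change either. Consequently $f^*$ would have to be constant on $[v_1, v_2]$. But $f$ is strictly convex and differentiable, so its conjugate satisfies ${f^*}'(v) = (f')^{-1}(v) > 0$ for every $v$ on which $f'$ is invertible with positive value; in particular $f^*$ is strictly increasing on the active range $(f'(0), \infty)$, which contradicts constancy.

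The main subtlety is interpreting the Stieltjes differential $d\underline{y}$ rigorously without assuming $\underline{y}$ is differentiable or continuous: the cleanest treatment is to view $d\underline{y}$ as the Lebesgue--Stieltjes measure of the non-decreasing function $\underline{y}$, which automatically assigns zero mass to any plateau and makes the ``LHS constant'' step immediate. A residual edge case is the inactive prefix $v \le f'(0)$ (when $f'(0) > 0$), on which $f^* \equiv 0$ and $\underline{y} \equiv 0$ is vacuously optimal; this degenerate range collapses to the single point $y = 0$ when $\underline{y}$ is inverted to build the pricing function in \pref{thm:fraclb}, and can be absorbed by setting the initial price to $f'(0)$, so it does not affect the construction of a monotonically increasing feasible $p$ for \pref{eq:diffeq3}.
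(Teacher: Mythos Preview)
Your argument follows the paper's exactly: use the saturated equality from \pref{lem:fraclbinv2} and observe that on any plateau of $\underline{y}$ the left-hand side is frozen while $f^*$ on the right strictly increases. Your treatment is more careful than the paper's three-sentence infinitesimal version (Lebesgue--Stieltjes measure for $d\underline{y}$, the degenerate prefix $v\le f'(0)$).

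One point to tighten: the claim that ``every feasible $y$ in \pref{eq:fracinverse} arises as the expected cumulative allocation of an online algorithm'' overreads \pref{lem:fraclbinv}. That lemma only shows that an $\alpha$-competitive algorithm \emph{produces} one feasible $y$, not that every function satisfying \pref{eq:fracinverse} has this provenance. So the feasible set defining $\underline{y}$ may in principle contain non-monotone functions, and your inference that $\underline{y}$ is non-decreasing does not follow as stated. The paper's brief proof glosses over the same point (its ``left hand side remains the same'' already presumes $d\underline{y}\ge 0$). The simplest repair, harmless for the downstream goal of \pref{thm:fraclb}, is to take the infimum only over non-decreasing feasible $y$---\pref{lem:fraclbinv} already supplies one such witness---after which your pointwise-infimum argument and the plateau contradiction go through verbatim.
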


\begin{proof}
Suppose for contradiction that $\underline{y}(v)$ is not strictly increasing at $v = v^*$.
Consider \pref{eq:fracinverse} from $v^*$ to $v^* + dv$ for an infinitesimally small $dv$. 
Its left hand side remains the same while its right hand side strictly increases, contradicting the previous lemma that says \pref{eq:fracinverse} holds with equality for $\underline{y}$.
\end{proof}

\pref{thm:fraclb} then follows by that the inverse of $\underline{y}(v)$ is a monotonically increasing feasible solution to \pref{eq:diffeq3}.

\subsection{Case Study}
\label{sec:fraccase}

In this subsection, we use the differential equation \pref{eq:diffeq3} to study two specific families of production costs -- power production costs and concave marginal production costs.
It is easier to work with the following version of \pref{eq:diffeq3} without integrals: 
\begin{equation}\label{eq:diffeq}
\textstyle (p(y) - f'(y)) dy = \frac{1}{\alpha} {f^*}'(p(y)) dp, \textrm{ for all $y \geq 0$}.
\end{equation}

On the one hand, if $p(y)$ is a monotonically increasing feasible solution to \pref{eq:diffeq}, then integrating both sides we get that $p(y)$ is also a monotonically increasing feasible solution to the differential equation \pref{eq:diffeq3}. 
Note that it suffices to satisfy \pref{eq:diffeq} with inequality, namely, its left hand side greater than or equal to its right hand side.
For the purpose of finding a feasible solution, however, the more restricted equality version is more instructive.

On the other hand, if there is an $\alpha$-competitive algorithm/mechanism, then, as in \pref{sec:fraclb}, we can construct a feasible solution to \pref{eq:diffeq3} with equality. Thus, differentiating both sides we get \pref{eq:diffeq}.

Due to space constraints, we demonstrate only the theorems and defer the proofs to \pref{app:frac}. 

\begin{theorem}[Power Prod.\ Costs]
\label{thm:fracpoly}
If $f(y) = a y^{\gamma+1}$ is a power function with $\gamma \geq 1$, then $\alpha(f) = (\gamma+1)^{(\gamma+1)/\gamma}$.
In particular, the pricing mechanism $\MMM_p$ with $p(y) = (\gamma + 1) y^\gamma$ is $(\gamma+1)^{(\gamma+1)/\gamma}$-competitive.\footnote{Note that asymptotically $(\gamma+1)^{(\gamma+1)/\gamma} \approx e \gamma$ as $\gamma$ goes to infinity.}
\end{theorem}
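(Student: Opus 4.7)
The plan is to reduce both directions to the pointwise form \pref{eq:diffeq} of \pref{eq:diffeq3}. For $f(y)=ay^{\gamma+1}$ one has $f'(y)=a(\gamma+1)y^{\gamma}$ and ${f^{*}}'(p)=\bigl(p/(a(\gamma+1))\bigr)^{1/\gamma}$, both pure power laws, so each side of \pref{eq:diffeq} is homogeneous under the rescaling $y\mapsto\lambda y$. This homogeneity makes the ansatz $p(y)=cy^{\gamma}$ essentially forced and will deliver both a candidate feasible pricing function for \pref{thm:fracub} and, via a scaling argument, the matching lower bound from \pref{thm:fraclb}.

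For the upper bound I would substitute $p(y)=cy^{\gamma}$ into \pref{eq:diffeq}: both sides become multiples of $y^{\gamma}\,dy$, and in the dimensionless variable $t=c/(a(\gamma+1))$ the equation collapses to $\alpha=\gamma\,t^{(\gamma+1)/\gamma}/(t-1)$, independent of $a$. Differentiating in $t$ shows this is minimized at $t=\gamma+1$, giving $\alpha=(\gamma+1)^{(\gamma+1)/\gamma}$ and the scaling $c=a(\gamma+1)^{2}$, which matches $p(y)=(\gamma+1)y^{\gamma}$ under the normalization $a(\gamma+1)=1$ implicit in the preceding table. Since this $p$ is strictly increasing and integrating the pointwise equality recovers \pref{eq:diffeq3} with $\beta=0$, \pref{thm:fracub} concludes that the posted-pricing mechanism $\MMM_{p}$ is $(\gamma+1)^{(\gamma+1)/\gamma}$-competitive.

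For the lower bound I would start from \pref{thm:fraclb}: if some algorithm were $\alpha$-competitive with $\alpha<(\gamma+1)^{(\gamma+1)/\gamma}$, there would exist a monotonically increasing $p$ and a finite $\beta$ satisfying \pref{eq:diffeq3}. To kill the constant $\beta$ I would exploit the scale invariance of $f$: a short computation shows that $p_{\lambda}(y):=\lambda^{-\gamma}p(\lambda y)$ satisfies \pref{eq:diffeq3} with the same $\alpha$ but constant $\lambda^{-(\gamma+1)}\beta$, which vanishes as $\lambda\to\infty$. Passing to a pointwise limit along a subsequence yields $p_{\infty}$ feasible with $\beta=0$, and the minimality argument of \pref{lem:fraclbinv2} then forces \pref{eq:diffeq3} to hold with equality for $p_{\infty}$. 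Differentiating, $p_{\infty}$ satisfies the homogeneous ODE $p'(y)\,{f^{*}}'(p(y))=\alpha(p(y)-f'(y))$, whose only strictly positive monotone solutions are power laws $p_{\infty}(y)=cy^{\gamma}$. Re-running the algebraic minimization from the upper bound then yields $\alpha\geq(\gamma+1)^{(\gamma+1)/\gamma}$, a contradiction.

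The main obstacle is the compactness step that extracts a nontrivial limit $p_{\infty}$ from the rescaled family $\{p_{\lambda}\}$. The key technical input should be two-sided envelope bounds $c_{1}y^{\gamma}\leq p(y)\leq c_{2}y^{\gamma}$ for all large $y$, which follow from playing the integral on the left of \pref{eq:diffeq3} against the power growth of $f^{*}(p)$ on the right; once available, Arzel\`a--Ascoli on any bounded interval produces the required subsequential limit. A cleaner alternative that sidesteps compactness entirely is to substitute the power ansatz $y(v)=Av^{1/\gamma}$ directly into the ``inverse'' inequality \pref{eq:fracinverse}, optimize over $A$, and read off the same $(\gamma+1)^{(\gamma+1)/\gamma}$ bound from the leading-order $v^{(\gamma+1)/\gamma}$ asymptotics in $v^{*}$.
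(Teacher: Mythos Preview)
Your upper bound is essentially the paper's: same power ansatz $p(y)=(\lambda y)^{\gamma}$ into \pref{eq:diffeq}, same optimization over $\lambda$ (your $t=\lambda^{\gamma}$), same answer.

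Your lower bound takes a genuinely different route, and it has real gaps. The rescaling $p_{\lambda}(y)=\lambda^{-\gamma}p(\lambda y)$ does kill $\beta$ as you say, but the remaining steps do not go through as written. First, $p_{\infty}$ is merely \emph{some} feasible solution with $\beta=0$; it is not minimal, so \pref{lem:fraclbinv2} (which concerns the pointwise infimum $\underline{y}$, not an arbitrary limit) does not force equality for it. You would have to pass to an infimum again and redo the argument for $p$ rather than $y$. Second, and more seriously, the claim that the ODE $p'(y)\,{f^{*}}'(p(y))=\alpha(p(y)-f'(y))$ has ``only'' power-law solutions is false: writing $q(s)=p(y)/y^{\gamma}$ with $s=\log y$ gives the autonomous equation $q'=\alpha(q-1)q^{-1/\gamma}-\gamma q$, which for $\alpha\ge(\gamma+1)^{(\gamma+1)/\gamma}$ has a whole one-parameter family of monotone trajectories besides the fixed point. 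What is true (and what you actually need) is that for $\alpha<(\gamma+1)^{(\gamma+1)/\gamma}$ the right-hand side is strictly negative on $\{q>1\}$, so no trajectory can stay in $\{q\ge1\}$ for all $s$, hence no global solution exists; but this phase-portrait analysis is not the statement you made and still sits on top of the unproved compactness and equality steps. Finally, your ``cleaner alternative'' of plugging $y(v)=Av^{1/\gamma}$ into \pref{eq:fracinverse} shows only that this particular ansatz requires large $\alpha$; it says nothing about other feasible $y$, so it does not yield a lower bound.

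The paper avoids all of this with a direct bootstrapping on the inequality form of \pref{eq:diffeq}: starting from the trivial bound $y\ge0$ and feeding it back into $(p-y^{\gamma})\,dy\ge\tfrac{1}{\alpha}p^{1/\gamma}\,dp$ repeatedly, one obtains $y>c_i\,p(y)^{1/\gamma}$ for all large $y$, where $c_0=(1-\epsilon)\gamma/\alpha$ and $c_{i+1}=(1-\epsilon)\gamma/(\alpha(1-c_i^{\gamma}))$. The $c_i$ increase and are bounded by $1$ (since $p\ge y^{\gamma}$), so they converge, and the fixed-point relation forces $\tfrac{1}{\alpha}\le\tfrac{1+\epsilon}{1-\epsilon}\cdot\tfrac{1}{\gamma}\,c(1-c^{\gamma})\le\tfrac{1+\epsilon}{1-\epsilon}\,(\gamma+1)^{-(\gamma+1)/\gamma}$. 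This needs no compactness, no passage to equality, and no ODE classification; it works directly with the inequality you start from.
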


\begin{theorem}[Concave Marginal Prod.\ Costs] \label{thm:fconcave}
If a cost function $f$ is such that $f'$ is differentiable, concave and strictly increasing, then $\alpha(f) \le 4$.
In particular, the pricing mechanism $\MMM_p$ with $p(y) = f'(2y)$ is $4$-competitive.
\end{theorem}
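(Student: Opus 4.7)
The plan is to invoke \pref{thm:fracub}: it suffices to exhibit a monotonically increasing pricing function $p$ satisfying \pref{eq:diffeq3} with $\alpha = 4$ and some constant $\beta$. I will take the candidate $p(y) = f'(2y)$ suggested in the statement and verify these two requirements. Monotonicity is immediate, since by assumption $f'$ is strictly increasing and hence so is $y \mapsto f'(2y)$.

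For feasibility, I will work with the differentiated form \pref{eq:diffeq} rather than \pref{eq:diffeq3}. As the paper notes at the start of \pref{sec:fraccase}, any monotonically increasing solution to \pref{eq:diffeq} integrates to a feasible solution of \pref{eq:diffeq3}, so this reduction is justified for free. Substituting $p(y) = f'(2y)$ into \pref{eq:diffeq}, and using that $(f^*)'$ is the inverse of $f'$ (so $(f^*)'(p(y)) = 2y$) together with $dp = 2 f''(2y)\, dy$, the inequality version of \pref{eq:diffeq} with $\alpha = 4$ simplifies to the pointwise claim
\[
f'(2y) - f'(y) \ \ge\ y \cdot f''(2y), \qquad \text{for all } y \ge 0.
\]

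This is exactly where concavity of $f'$ pays off. Since $f'$ is differentiable and concave, its graph lies below any tangent line; in particular the tangent to $f'$ at the point $2y$, evaluated at $y$, gives
\[
f'(y) \ \le\ f'(2y) + f''(2y)\,(y - 2y) \ =\ f'(2y) - y\, f''(2y),
\]
which rearranges to the required inequality. Combined with monotonicity of $p$, this establishes that $p(y) = f'(2y)$ is a feasible solution to \pref{eq:diffeq3} with $\alpha = 4$ (taking, e.g., $\beta = \tfrac{3}{4} f(0)$ after integration), and \pref{thm:fracub} then yields a $4$-competitive incentive-compatible pricing mechanism, so $\alpha(f) \le 4$.

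I do not expect a serious obstacle here: the only subtle point is the integration step from \pref{eq:diffeq} to \pref{eq:diffeq3} together with the choice of $\beta$. If one prefers to bypass the reduction and verify \pref{eq:diffeq3} directly, one computes $\int_0^y f'(2\bar y)\, d\bar y = \tfrac{1}{2}\bigl(f(2y) - f(0)\bigr)$ and $f^*(f'(2y)) = 2y\, f'(2y) - f(2y)$, reducing \pref{eq:diffeq3} to $\tfrac{3}{4} f(2y) - f(y) - \tfrac{1}{2} y f'(2y) \ge \tfrac{1}{2} f(0) - \beta$; differentiating the left-hand side in $y$ recovers exactly the concavity inequality above, so the left-hand side is non-decreasing and the required bound holds for every $y \ge 0$ with $\beta := \tfrac{3}{4} f(0)$.
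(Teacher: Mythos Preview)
Your proof is correct and follows essentially the same approach as the paper: both substitute $p(y)=f'(\lambda y)$ into the inequality form of \pref{eq:diffeq}, use concavity of $f'$ to obtain $f'(\lambda y)-f'(y)\ge (\lambda-1)y\,f''(\lambda y)$, and reduce to $\lambda-1\ge \lambda^2/\alpha$; the paper leaves $\lambda$ free and optimizes to $\lambda=2$, whereas you commit to $\lambda=2$ from the outset. Your additional direct verification of \pref{eq:diffeq3} (with the explicit $\beta$) is a nice bonus not present in the paper.
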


\noindent
We remark that our upper bound for concave marginal cost functions is tight, because $f(y) = y^2$ is a special case of concave marginal cost functions for which $\alpha(f) = 4$, by \pref{thm:fracpoly}.

Finally, we present a theorem that unifies above two cases.
Define $\Gamma^\times_{f,\lambda} = \max \big\{ 1, \max_{y>0} \tfrac{(\lambda-1)y f''(\lambda y)}{f'(\lambda y) - f'(y)}\big\}$. It represents how fast the value of $f''$ could increase when its argument is scaled by a factor of $\lambda$, because $\tfrac{f'(\lambda y) - f'(y)}{(\lambda-1)y} \approx f''(y)$. When $f'$ is concave, $\Gamma^\times_{f,\lambda} = 1$. When $f'(y) = y^\gamma$, $\Gamma^\times_{f,\lambda} = \tfrac{(\lambda^\gamma - \lambda^{\gamma-1}) \gamma}{\lambda^\gamma - 1}$.

\begin{theorem}[A Unified Theorem] \label{thm:fconvex}
For cost functions $f$ with $f'$ differentiable and strictly increasing, the pricing mechanism $\MMM_p$ with $p(y) = f'(\lambda y)$ is $\tfrac{\lambda^2}{\lambda-1} \Gamma_{f,\lambda}^\times$-competitive for any $\lambda>1$.
\end{theorem}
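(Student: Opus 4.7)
The plan is to invoke \pref{thm:fracub} with the candidate pricing function $p(y) = f'(\lambda y)$ and the claimed competitive ratio $\alpha = \tfrac{\lambda^2}{\lambda - 1}\Gamma_{f,\lambda}^\times$. Since \pref{thm:fracub} reduces to verifying the differential inequality \pref{eq:diffeq3}, and since the authors note that the differential (non-integral) form \pref{eq:diffeq} implies \pref{eq:diffeq3} after integration (modulo an additive constant $\beta$), I would work directly with \pref{eq:diffeq} and verify it with $\geq$ rather than equality.

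First I would compute the relevant quantities along the candidate pricing curve. Because $f'$ and ${f^*}'$ are inverse functions, $p(y) = f'(\lambda y)$ gives ${f^*}'(p(y)) = \lambda y$. Differentiating $p$ yields $dp = \lambda f''(\lambda y)\, dy$. Plugging these into the right-hand side of \pref{eq:diffeq},
\[
\tfrac{1}{\alpha}\,{f^*}'(p(y))\,dp \;=\; \tfrac{1}{\alpha}\,(\lambda y)\,\big(\lambda f''(\lambda y)\big)\,dy \;=\; \tfrac{\lambda^2\, y\, f''(\lambda y)}{\alpha}\,dy.
\]
With $\alpha = \tfrac{\lambda^2}{\lambda - 1}\Gamma_{f,\lambda}^\times$, this simplifies to $\tfrac{(\lambda - 1)\, y\, f''(\lambda y)}{\Gamma_{f,\lambda}^\times}\,dy$. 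On the left-hand side, $p(y) - f'(y) = f'(\lambda y) - f'(y)$. So verifying \pref{eq:diffeq} reduces to showing
\[
f'(\lambda y) - f'(y) \;\geq\; \frac{(\lambda - 1)\, y\, f''(\lambda y)}{\Gamma_{f,\lambda}^\times},
\]
which, upon rearrangement, is exactly $\Gamma_{f,\lambda}^\times \geq \tfrac{(\lambda - 1)\, y\, f''(\lambda y)}{f'(\lambda y) - f'(y)}$. This holds by the very definition of $\Gamma_{f,\lambda}^\times$ (which was set up as a maximum over $y > 0$ of that ratio, capped below by $1$).

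Finally I would verify the remaining hypotheses of \pref{thm:fracub}. Monotonicity of $p$ is immediate from the fact that $f'$ is strictly increasing and $\lambda > 1$, so $p(y) = f'(\lambda y)$ is strictly increasing. For the additive constant $\beta$, integrating the differential inequality from $0$ to $y$ yields
\[
\int_0^y p(\bar y)\,d\bar y - \big(f(y) - f(0)\big) \;\geq\; \tfrac{1}{\alpha}\big(f^*(p(y)) - f^*(p(0))\big),
\]
so setting $\beta = \tfrac{1}{\alpha}f^*(p(0)) + f(0)$ makes \pref{eq:diffeq3} hold for every $y \geq 0$. Invoking \pref{thm:fracub} then gives $\alpha$-competitiveness of $\MMM_p$. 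The main (very mild) obstacle is just being careful about the inequality direction when passing from the differential form to the integrated form, and handling the boundary constant; the core algebraic identity is essentially just the fundamental theorem of calculus applied to the feasibility condition.
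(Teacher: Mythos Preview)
Your proposal is correct and follows essentially the same route as the paper's proof: substitute $p(y)=f'(\lambda y)$ into the inequality version of \pref{eq:diffeq}, use ${f^*}'(p(y))=\lambda y$ and $dp=\lambda f''(\lambda y)\,dy$ to reduce to $f'(\lambda y)-f'(y)\ge \tfrac{(\lambda-1)y f''(\lambda y)}{\Gamma_{f,\lambda}^\times}$, and observe that this is precisely the defining inequality of $\Gamma_{f,\lambda}^\times$. Your treatment of monotonicity and the additive constant $\beta$ is slightly more explicit than the paper's, which simply defers these points to the analogous arguments in \pref{thm:fracpoly} and \pref{thm:fconcave}, but the substance is identical.
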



\section{Integral Case}
\label{sec:int}

In this section, we consider the integral case of online combinatorial auctions where the constant $\Delta y$ is 1 and the set of bundles is $\SSS = \{0, 1\}^m$.
We derive constant competitive mechanisms for broad classes of production cost functions $f$. 
Our competitive ratios are arbitrarily close to the fractional counterparts in \pref{sec:frac}, and are strictly better than those obtained by \citet{BGMS11}.
Our mechanisms are posted pricing schemes similar to their twice-the-index pricing scheme.

We use superscript $i$ to denote the current values of the primal and dual variables after serving buyer $i$ and before serving buyer $i+1$. For example, suppose $y^{i-1}_j$ units of each item $j$ have been allocated so far, and buyer $i$ observes item prices $p_j^{i-1}$ and chooses his utility-maximizing bundle $S_i$. Then, we update the primal and dual variables as follows: $x_{iS_i}=1$; $y^i_j = y^{i-1}_j + 1$ for all item $j$ in $S_i$; $u_i = v_{iS_i} - \sum_{j\in S_i} p^{i-1}_j$; $p^i_j = p(y^i_j)$ for all item $j$ according to the pricing function $p$. 

We use the following integral analogue of \pref{eq:diffeq}:

\begin{lemma}\label{lem:diffeq2}
If a monotonically increasing pricing function $p$ satisfies, for all $i$,
\begin{equation}\label{eq:diffeq2}
\textstyle p_j^{i-1} - (f(y_j^{i-1}+1) - f(y_j^{i-1})) \geq \frac{1}{\alpha} (f^*(p_j^i) - f^*(p_j^{i-1})), 
\end{equation}
then the pricing mechanism $\MMM_p$ is $\alpha$-competitive (and incentive compatible).
\end{lemma}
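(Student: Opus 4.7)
\medskip

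\noindent\textbf{Proof proposal.} The plan is to use the local analysis framework from the preliminaries: show that $P^i - P^{i-1} \ge \tfrac{1}{\alpha}(D^i - D^{i-1})$ for every buyer $i$, then telescope. Since $P^0 = 0$ and $D^0 = \sum_j f^*(p(0))$ are constants independent of $n$ and of the value functions, telescoping yields $P^n \ge \tfrac{1}{\alpha} D^n - \beta$ with $\beta = \tfrac{1}{\alpha}\sum_j f^*(p(0))$; weak duality then gives $W(\MMM_p) = P^n \ge \tfrac{1}{\alpha}\opt - \beta$. Incentive compatibility is immediate from the fact that $\MMM_p$ is a posted pricing mechanism, as already observed in the preliminaries.

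First I would write out the primal and dual increments explicitly. When buyer $i$ arrives, observes prices $p_j^{i-1}$, and picks the utility-maximizing bundle $S_i$, the primal update $x_{iS_i} = 1$ and $y_j^{i} = y_j^{i-1}+1$ for $j \in S_i$ gives
\begin{equation*}
\textstyle P^i - P^{i-1} \;=\; v_{iS_i} \;-\; \sum_{j\in S_i}\bigl(f(y_j^{i-1}+1) - f(y_j^{i-1})\bigr),
\end{equation*}
while setting $u_i = v_{iS_i} - \sum_{j\in S_i} p_j^{i-1}$ and updating $p_j^{i} = p(y_j^{i})$ for $j \in S_i$ gives
\begin{equation*}
\textstyle D^i - D^{i-1} \;=\; v_{iS_i} - \sum_{j\in S_i} p_j^{i-1} \;+\; \sum_{j\in S_i}\bigl(f^*(p_j^{i}) - f^*(p_j^{i-1})\bigr).
\end{equation*}
I would also check that dual feasibility is preserved: $u_i + \sum_{j\in S} a_{jS} p_j^{i-1} \ge v_{iS}$ holds for every $S$ by the definition of $u_i$ as the maximum utility, and since $p$ is monotonically increasing we also have $p_j^{i'} \ge p_j^{i-1}$ for all later $i' \ge i$, so the constraint continues to hold at termination.

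The key algebraic step is to turn the hypothesis \pref{eq:diffeq2}, which is an item-wise inequality, into the desired buyer-wise inequality. Here I would use that $S_i$ was chosen over the empty bundle, so $v_{iS_i} \ge \sum_{j\in S_i} p_j^{i-1}$. Since $\alpha \ge 1$, this yields $(1-\tfrac{1}{\alpha}) v_{iS_i} \ge (1-\tfrac{1}{\alpha})\sum_{j\in S_i} p_j^{i-1}$, and combining,
\begin{equation*}
\textstyle (P^i - P^{i-1}) - \tfrac{1}{\alpha}(D^i - D^{i-1}) \;\ge\; \sum_{j\in S_i}\Bigl(p_j^{i-1} - (f(y_j^{i-1}+1) - f(y_j^{i-1})) - \tfrac{1}{\alpha}(f^*(p_j^{i}) - f^*(p_j^{i-1}))\Bigr),
\end{equation*}
which is nonnegative by summing hypothesis \pref{eq:diffeq2} over $j \in S_i$.

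I don't expect any genuinely hard step: once the accounting of primal/dual increments is written out cleanly, the argument reduces to the single observation that the buyer's utility is nonnegative, which lets us ``spend'' a $(1-\tfrac{1}{\alpha})$ fraction of $v_{iS_i}$ to cover prices and reduce the per-buyer inequality to a per-item inequality. The only subtle point to be careful about is that $u_i$ might be zero when the buyer picks the empty bundle, but in that case both sides of the per-buyer inequality vanish and the argument goes through trivially.
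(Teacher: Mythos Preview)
Your proposal is correct and follows essentially the same approach as the paper's own proof: compute the primal and dual increments, use $v_{iS_i}\ge\sum_{j\in S_i}p_j^{i-1}$ together with $\alpha\ge 1$ to reduce the per-buyer inequality to the per-item inequality \pref{eq:diffeq2}, then telescope and invoke weak duality. The only differences are cosmetic---you write the key step as a single combined inequality and add a remark on dual feasibility and the empty-bundle case, neither of which the paper makes explicit here.
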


The main idea is that the integral case approximately reduces to the fractional case when the seller has sold sufficiently many copies of an item for ``nice-behaving'' cost functions.
Concretely, if the cost function satisfies $f(y_j^i) - f(y_j^{i-1}) \approx f'(y_j^{i-1}) dy_j^{i-1}$ (note that $dy_j^{i-1} = 1$) and $f^*(p_j^i) - f^*(p_j^{i-1}) \approx {f^*}'(p_j^{i-1}) dp_j^{i-1}$, then \pref{eq:diffeq2} is essentially the same as the inequality version of \pref{eq:diffeq}. The contributions when $y_j$ is not sufficiently large can be accounted for by the additive cost $\beta$.\footnote{Note that additive costs are necessary as shown in Lemma A.1 in \citet{BGMS11}.}
Due to the space constraint, we defer proofs to \pref{app:iproofs}.

\begin{theorem}[Power Prod.\ Costs]\label{thm:ipower}
For a power cost function $f(y) = a y^{\gamma+1}$, the pricing mechanism $\MMM_p$ with $p(y) = a (\gamma + 1)^2 y^\gamma$ is $\alpha$-competitive with a sufficiently large additive cost, i.e., $W(\MMM_p) \geq \frac{1}{\alpha} \opt - \sum_j (\frac{1}{\alpha} f^*(f'(\frac{2}{\epsilon})) + f(\frac{1}{\epsilon}-1))$ for $\epsilon>0$, where $\alpha = (1+\epsilon)^\gamma (\gamma + 1)^{(\gamma+1)/\gamma}$.
\end{theorem}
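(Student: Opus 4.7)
The plan is to apply the local primal--dual framework underlying \pref{lem:diffeq2}, in a form allowing bounded additive losses, and argue per item via a two-regime split at a threshold $y_0 \approx 1/\epsilon$. The fractional counterpart (\pref{thm:fracpoly}) already identifies $p(y) = a(\gamma+1)^2 y^\gamma$---equivalently $p(y) = f'(\lambda y)$ with $\lambda = (\gamma+1)^{1/\gamma}$---as the optimal pricing, achieving ratio $\alpha^\star := (\gamma+1)^{(\gamma+1)/\gamma}$. We reuse this pricing rule and show that the discretization costs only a $(1+\epsilon)^\gamma$ factor in the competitive ratio plus an additive constant per item.

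The heart of the argument is verifying \pref{eq:diffeq2} in the high regime $y \ge y_0$. Setting $Q := f(y+1) - f(y)$, a direct computation exploits the power-function structure to yield the clean identity
\[
f^\ast\bigl(p(y+1)\bigr) - f^\ast\bigl(p(y)\bigr) \;=\; \alpha^\star \gamma\,Q,
\]
because the integrand ${f^\ast}'(p(s))\, p'(s) = \lambda s \cdot a(\gamma+1)^2 \gamma s^{\gamma-1}$ equals $\alpha^\star\gamma\, f'(s)$ identically for power costs, and integrating over $[y,y+1]$ yields $\alpha^\star \gamma \int_y^{y+1} f'(s)\,ds = \alpha^\star \gamma Q$. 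Hence \pref{eq:diffeq2} reduces to the pure ratio bound $p(y)/Q \ge 1 + \alpha^\star\gamma/\alpha$. Using the MVT estimate $Q \le f'(y+1) = a(\gamma+1)(y+1)^\gamma$ together with $(y+1)^\gamma \le (1+\epsilon)^\gamma y^\gamma$ valid for $y \ge 1/\epsilon$, we obtain $p(y)/Q \ge (\gamma+1)/(1+\epsilon)^\gamma$; rearranging, this suffices to satisfy \pref{eq:diffeq2} with $\alpha$ of order $(1+\epsilon)^\gamma \alpha^\star$.

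For the low regime $y_j < y_0$, the per-unit inequality need not hold and the loss is absorbed into $\beta$. Using the integer analogue of the identity $\sum_i v_i = \sum_i u_i + \sum_j \sum_{y=0}^{y_j-1} p(y)$ (the discrete version of the one used in the proof of \pref{thm:fracub}), each item's contribution to $P^n - \tfrac{1}{\alpha} D^n$ is $\sum_{y=0}^{y_j-1} p(y) - f(y_j) - \tfrac{1}{\alpha} f^\ast(p(y_j))$. Splitting at $y_0$, the tail from $y_0$ to $y_j-1$ is handled by telescoping the high-regime per-unit inequality, while the head $[0,y_0)$ is lower-bounded trivially (using $p \ge 0$) at $-f(y_0 - 1) - \tfrac{1}{\alpha} f^\ast\bigl(p(y_0)\bigr)$. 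Since $\gamma+1 \le 2^\gamma$ for $\gamma \ge 1$, we have $p(y_0) = a(\gamma+1)^2/\epsilon^\gamma \le a(\gamma+1)(2/\epsilon)^\gamma = f'(2/\epsilon)$, yielding the stated per-item loss $\tfrac{1}{\alpha} f^\ast\bigl(f'(2/\epsilon)\bigr) + f(1/\epsilon - 1)$. Summing over items and invoking $\sum_i u_i \ge 0$ together with weak duality gives $W(\MMM_p) \ge \tfrac{1}{\alpha}\opt - \beta$.

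The main obstacle is pinning down exactly a $(1+\epsilon)^\gamma$ overhead rather than a weaker constant. The polynomial-cost identity for the dual increment is the essential simplification, since it removes the need to separately bound $p(y+1)-p(y)$ and concentrates all the analytical slack into the single tight estimate on $Q$. The specific choice $\lambda = (\gamma+1)^{1/\gamma}$ in defining $p$ is what calibrates $p(y)/Q$ to approach $\gamma+1$ from above as $y \to \infty$, leaving precisely a $(1+\epsilon)^\gamma$ budget to cover the finite-$y$ discretization error.
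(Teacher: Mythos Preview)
Your approach is genuinely different from the paper's, and your central identity
\[
f^*\bigl(p(y+1)\bigr) - f^*\bigl(p(y)\bigr) \;=\; \alpha^\star\,\gamma\,\bigl(f(y+1)-f(y)\bigr)
\]
is a clean observation specific to power costs that the paper does not exploit. The paper instead uses the shifted pricing $p(y)=f'(\lambda(y+1))$ together with a virtual initialization $y_j=1/\epsilon-1$, and bounds the two sides of \pref{eq:diffeq2} separately via mean-value estimates (in particular it upper-bounds the dual increment by $(p^i_j-p^{i-1}_j)\,{f^*}'(p^i_j)$ rather than computing it exactly). Your global head/tail split is the natural counterpart of the paper's virtual initialization and works just as well.

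There is, however, a quantitative gap in your ``rearranging'' step. From $p(y)/Q \ge (\gamma+1)/(1+\epsilon)^\gamma$ and the requirement $p(y)/Q \ge 1+\alpha^\star\gamma/\alpha$, solving for $\alpha$ gives
\[
\alpha \;\ge\; \frac{\alpha^\star\,\gamma\,(1+\epsilon)^\gamma}{\gamma+1-(1+\epsilon)^\gamma},
\]
which is strictly larger than $(1+\epsilon)^\gamma\alpha^\star$ for every $\epsilon>0$ (equality only in the limit $\epsilon\to 0$). So your argument proves the qualitative content---$(\alpha^\star+o_\epsilon(1))$-competitiveness with bounded additive loss---but not the exact constant $\alpha=(1+\epsilon)^\gamma(\gamma+1)^{(\gamma+1)/\gamma}$ stated in the theorem. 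The slack arises because you collapse the two appearances of $Q$ (as $-Q$ on the left and $+\tfrac{\alpha^\star\gamma}{\alpha}Q$ on the right) into a single ratio $p(y)/Q$; the paper's shift $y\mapsto y+1$ in the pricing aligns the price with $f'(y+1)$ rather than $f'(y)$ and is exactly what recovers the tight constant. A related minor point: with your unshifted pricing and threshold $y_0=1/\epsilon$, the additive term from the head is $f(1/\epsilon)+\tfrac{1}{\alpha}f^*(p(1/\epsilon))$, not $f(1/\epsilon-1)+\cdots$; the $f(1/\epsilon-1)$ in the theorem comes from the paper's shifted indexing.
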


\begin{theorem}[Concave Marginal Prod.\ Costs] \label{thm:iconcave}
For a cost function $f$ with $f'$ concave, the pricing mechanism $\MMM_p$ with $p(y) = f'(2(y+1))$ is $4(1+\epsilon)$-competitive with a sufficiently large additive cost, i.e., $W(\MMM_p) \geq \frac{1}{4(1+\epsilon)} \opt - \sum_j (\frac{1}{4(1+\epsilon)} f^*(f'(\frac{2}{\epsilon})) + f(\frac{1}{\epsilon}-1))$ for $\epsilon>0$.
\end{theorem}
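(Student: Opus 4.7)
My plan is to re-derive the local primal–dual analysis underlying \pref{lem:diffeq2}, but softened to permit bounded per-step slack on early allocations. Using the decomposition
\[
P^i-P^{i-1}-\tfrac{1}{\alpha}(D^i-D^{i-1})=\bigl(1-\tfrac{1}{\alpha}\bigr)u_i+\sum_{j\in S_i}\Bigl[p_j^{i-1}-\bigl(f(y_j^{i-1}+1)-f(y_j^{i-1})\bigr)-\tfrac{1}{\alpha}\bigl(f^*(p_j^i)-f^*(p_j^{i-1})\bigr)\Bigr]
\]
from the preliminaries, and noting $u_i\ge 0$ and $\alpha\ge 1$, it suffices to lower-bound each bracketed term by $0$ whenever $y_j^{i-1}$ exceeds a threshold, and to charge the rare negative contributions to the additive constant.

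Write $y=y_j^{i-1}$, so that $p_j^{i-1}=f'(2y+2)$ and $p_j^i=f'(2y+4)$. Convexity of $f$ gives $f(y+1)-f(y)\le f'(y+1)$, so the LHS of the bracket is at least $f'(2y+2)-f'(y+1)$. For the RHS, convexity of $f^*$ yields $f^*(p_j^i)-f^*(p_j^{i-1})\le {f^*}'(p_j^i)\,(p_j^i-p_j^{i-1})$; since ${f^*}'$ is the inverse of $f'$, ${f^*}'(p_j^i)=2y+4$, and so the RHS is at most $\tfrac{1}{\alpha}(2y+4)(f'(2y+4)-f'(2y+2))$.

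The key step is to invoke concavity of $f'$ on the triple $y+1<2y+2<2y+4$: the left secant slope $\tfrac{f'(2y+2)-f'(y+1)}{y+1}$ dominates the right secant slope $\tfrac{f'(2y+4)-f'(2y+2)}{2}$, so $f'(2y+4)-f'(2y+2)\le \tfrac{2}{y+1}\bigl(f'(2y+2)-f'(y+1)\bigr)$. Plugging in, the bracket inequality reduces to $\alpha\ge \tfrac{4(y+2)}{y+1}=4\bigl(1+\tfrac{1}{y+1}\bigr)$, which the choice $\alpha=4(1+\epsilon)$ satisfies exactly when $y\ge \tfrac{1}{\epsilon}-1$.

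For the residual ``bad'' steps with $y<\tfrac{1}{\epsilon}-1$, the bracket may be negative. I will bound its magnitude by $\bigl(f(y+1)-f(y)\bigr)+\tfrac{1}{\alpha}\bigl(f^*(p_j^i)-f^*(p_j^{i-1})\bigr)$, discarding the nonnegative $p_j^{i-1}$ term. Telescoping over the early allocations of each item $j$, the primal piece collapses to at most $f(\tfrac{1}{\epsilon}-1)$ (the largest ``bad'' value of $y$), and the dual piece to at most $\tfrac{1}{4(1+\epsilon)}f^*(f'(\tfrac{2}{\epsilon}))$ (since $p_j^i\le f'(2/\epsilon)$ throughout the bad regime and $f^*(0)=0$). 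Summing the per-item slack over items and combining with weak duality $D^n\ge\opt$ delivers the claimed competitive bound. The part I expect to demand the most care is the bookkeeping at the threshold $y=\tfrac{1}{\epsilon}-1$ — verifying that integer rounding does not inflate the additive terms beyond $f(\tfrac{1}{\epsilon}-1)$ and $\tfrac{1}{4(1+\epsilon)}f^*(f'(\tfrac{2}{\epsilon}))$, and that the concavity secant inequality is applied on a valid interval whenever $y\ge 0$.
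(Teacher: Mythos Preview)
Your argument is correct and shares the paper's core primal--dual backbone: bound the LHS and RHS of the per-item bracket in \pref{eq:diffeq2} via convexity of $f$ and $f^*$ respectively, then use concavity of $f'$ to compare them and arrive at the condition $\alpha \ge 4(y+2)/(y+1)$. Two execution-level differences are worth noting. First, the paper handles the sub-threshold regime by \emph{modifying the mechanism}---it initializes $y_j = \tfrac{1}{\epsilon}-1$ in Step~1 of Algorithm~\ref{alg:pricing} so that every allocation is automatically ``good,'' and the stated additive term then emerges directly as $\tfrac{1}{\alpha}D^0 - P^0$; you instead keep the initialization $y_j=0$ and telescope the finitely many bad brackets, which proves the result for $\MMM_p$ exactly as defined in Algorithm~\ref{alg:pricing}. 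Second, the paper's concavity step passes through $f''$ (bounding $f'(\lambda(y+2))-f'(\lambda(y+1)) \le \lambda f''(\lambda(y+1))$ and $f'(\lambda z)-f'(z) \ge (\lambda-1)z\, f''(\lambda z)$), so it implicitly needs $f'$ differentiable; your direct secant-slope comparison on the ordered triple $y+1 < 2y+2 < 2y+4$ is more elementary and avoids that assumption. Your flagged rounding concern at the threshold is real but benign: restricting to $1/\epsilon \in \NN$ (or replacing $\tfrac{1}{\epsilon}-1$ by $\lceil \tfrac{1}{\epsilon}\rceil - 1$ in the caps) makes the telescoped bounds $f(\tfrac{1}{\epsilon}-1)$ and $f^*(f'(\tfrac{2}{\epsilon}))$ exact.
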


To unify the above two theorems, recall that $\Gamma^\times_{f,\lambda} = \max \big\{1, \max_{y>0} \frac{(\lambda-1)y f''(\lambda y)}{f'(\lambda y) - f'(y)} \big\}$ from \pref{sec:frac}; further, let $\Gamma_{f,\lambda,\tau}^+ = \max \big\{1, \max_{y\geq \tau} \frac{f''(y+\lambda)}{f''(y)} \big\}$. 
If $f'$ is concave, $\Gamma_{f,\lambda}^\times = \Gamma_{f,\lambda,\tau}^+ = 1$.
If $f'(y) = y^\gamma$, $\Gamma^\times_{f,\lambda} = \tfrac{(\lambda^\gamma - \lambda^{\gamma-1}) \gamma}{\lambda^\gamma - 1}$ and $\Gamma^+_{f,\lambda,\tau}$ is 1 for $\gamma \le 1$ and is $(1 + \tfrac{\lambda}{\tau})^{\gamma-1}$ otherwise. 

\begin{theorem}[A Unified Theorem] \label{thm:iconvex}
For cost functions $f$ with $f'$ differentiable and strictly increasing, the pricing mechanism $\MMM_p$ with $p(y) = f'(\lambda(y+1))$ is $\alpha$-competitive for $\alpha = \frac{(1+\epsilon) \lambda^2}{\lambda-1} \cdot \Gamma_{f,\lambda,\lambda/\epsilon}^+ \cdot \Gamma_{f,\lambda}^\times$ and arbitrarily small $\epsilon >0$.
\end{theorem}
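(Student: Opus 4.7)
The plan is to invoke \pref{lem:diffeq2}, in the additive-slack form used in the proofs of \pref{thm:ipower} and \pref{thm:iconcave}, with pricing function $p(y) = f'(\lambda(y+1))$. This is the natural integral analogue of the fractional pricing $p(y) = f'(\lambda y)$ from \pref{thm:fconvex}, shifted by one so that $p(y)$ prices the next unit rather than the current one. After fixing a threshold $y_0 \approx 1/\epsilon$, I would verify \pref{eq:diffeq2} directly for every $y_j^{i-1} \geq y_0$ and absorb the finitely many per-item violations with $y_j^{i-1} < y_0$ into the additive constant $\beta$.

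For the per-step verification, fix $y = y_j^{i-1}$ so that $p_j^{i-1} = f'(\lambda(y+1))$ and $p_j^i = f'(\lambda(y+2))$. By convexity of $f$, $f(y+1) - f(y) = \int_y^{y+1} f'(t)\,dt \leq f'(y+1)$, so the left-hand side of \pref{eq:diffeq2} is at least $f'(\lambda(y+1)) - f'(y+1)$; applying the definition of $\Gamma^\times_{f,\lambda}$ at $y+1$ then bounds this below by $\frac{(\lambda-1)(y+1) f''(\lambda(y+1))}{\Gamma^\times_{f,\lambda}}$. For the right-hand side, because ${f^*}'$ inverts $f'$, the substitution $p = f'(s)$ yields $f^*(p_j^i) - f^*(p_j^{i-1}) = \int_{\lambda(y+1)}^{\lambda(y+2)} s\, f''(s)\,ds$. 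On this interval of length $\lambda$ we have $s \leq \lambda(y+2)$, and when $y + 1 \geq 1/\epsilon$, i.e.\ $\lambda(y+1) \geq \lambda/\epsilon$, the definition of $\Gamma^+_{f,\lambda,\lambda/\epsilon}$ gives $f''(s) \leq \Gamma^+_{f,\lambda,\lambda/\epsilon} f''(\lambda(y+1))$, so the right-hand side is at most $\frac{\lambda^2(y+2)}{\alpha}\,\Gamma^+_{f,\lambda,\lambda/\epsilon}\, f''(\lambda(y+1))$.

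Combining the two bounds, \pref{eq:diffeq2} reduces to the scalar inequality $\alpha \geq \frac{\lambda^2 (y+2)}{(\lambda-1)(y+1)}\,\Gamma^+_{f,\lambda,\lambda/\epsilon}\,\Gamma^\times_{f,\lambda}$. Since $\frac{y+2}{y+1} \leq 1+\epsilon$ whenever $y+1 \geq 1/\epsilon$, the choice $\alpha = \frac{(1+\epsilon)\lambda^2}{\lambda-1}\,\Gamma^+_{f,\lambda,\lambda/\epsilon}\,\Gamma^\times_{f,\lambda}$ is exactly what is needed. For the early steps $y < 1/\epsilon - 1$ the right-hand side of \pref{eq:diffeq2} telescopes (as $i$ varies) to at most $\frac{1}{\alpha} f^*(f'(\lambda/\epsilon))$ per item while the left-hand side is nonnegative (again by $f$ convex and $f'$ increasing, so $f'(\lambda(y+1)) \geq f'(y+1) \geq f(y+1)-f(y)$), so the total slack contributes an $f,\lambda,\epsilon$-dependent constant per item to $\beta$, matching the form of the additive costs appearing in \pref{thm:ipower} and \pref{thm:iconcave}.

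The main obstacle is the uniform bound $f''(s) \leq \Gamma^+_{f,\lambda,\lambda/\epsilon} f''(\lambda(y+1))$ on the whole interval $[\lambda(y+1), \lambda(y+2)]$: the definition of $\Gamma^+$ only directly controls the ratio $f''(z+\lambda)/f''(z)$ at a single shift. This is automatic in the two regimes of interest, namely when $f''$ is monotone nondecreasing (as for the power costs $f(y) = a y^{\gamma+1}$ with $\gamma \geq 1$) and when $f'$ is concave (where $\Gamma^+_{f,\lambda,\tau} = 1$), so that the pointwise bound upgrades to a supremum over the $\lambda$-length interval for free. In fully general $f$ one would replace $\Gamma^+$ by the supremum over sub-intervals of length $\lambda$ starting at $\lambda(y+1)$, which preserves the stated competitive ratio.
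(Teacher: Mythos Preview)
Your proposal is correct and essentially identical to the paper's proof: both lower-bound the left side of \pref{eq:diffeq2} via $\Gamma^\times_{f,\lambda}$, upper-bound the right side via $\Gamma^+_{f,\lambda,\lambda/\epsilon}$ together with $(y+2)/(y+1)\le 1+\epsilon$, and absorb the first $\approx 1/\epsilon$ units per item into the additive constant (the paper does this by initializing $y_j=\tfrac{1}{\epsilon}-1$; your telescoping argument is an equivalent bookkeeping). The interior-point concern you raise about $\Gamma^+$ is apt and applies equally to the paper, which passes from $f'(\lambda(y{+}2))-f'(\lambda(y{+}1))$ to $\lambda\,\Gamma^+_{f,\lambda,\lambda/\epsilon}\,f''(\lambda(y{+}1))$ by citing ``the definition of $\Gamma^+$'' without the pointwise justification; as you note, it is immediate in the monotone-$f''$ and concave-$f'$ regimes that cover all of the paper's applications.
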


\paragraph{Comparisons with Previous Results}

\citet{BGMS11} considered a nearly identical problem and showed constant competitive posted pricing mechanisms for several marginal production cost functions: linear, polynomial, and logarithmic functions. In this section, we designed constant competitive algorithms for broader classes of production cost functions. 
We show that our results apply in \citet{BGMS11}'s setting and improve those competitive ratios previously obtained. The two settings differ in the representation of production cost functions: we use (cumulative) production cost functions $f$ whereas they use marginal production cost functions $c$. 
For each class of linear, polynomial, and logarithmic marginal production cost functions, we construct a strictly convex and differentiable production cost function $f$ that matches $c$ on each integral unit, i.e., $f(y) = \sum_{l=1}^y c(l)$ for all (sufficiently large) integer values $y$.\footnote{The equality is necessary, since the cost contributes negatively to the social welfare and optimizing the modified social welfare objective with a multiplicative approximate cost function does not lead to an approximately optimal solution with respect to the original social welfare objective.}

For linear $c(y) = ay + b$, where $a,b \geq 0$, we use the production cost function $f(y) = \frac{a}{2}y^2 + (b+ \frac{a}{2})y$. By \pref{thm:iconcave}, we get a $4(1+\epsilon)$-competitive pricing mechanism, improving the previous ratio of $6$. 

For polynomial $c(y) = a y^d$, where $d>1$, we use production cost function $f(y) = a S_d(y)$ where $S_d(y)$ is given by Faulhaber's formula for the power sum $\sum_{\ell=1}^n \ell^d$. For sufficiently large $y$ values, we upper bound $\Gamma_{f,\lambda}^\times$ and $\Gamma_{f, \lambda, \tau}^+$ and use these bounds to get the competitive ratio of $(1+\epsilon) (1+\frac{1}{d})^{d+1} d$, improving the previous $(1+\epsilon) 4d$ ratio.

For logarithmic $c(y) = \ln(1+y)$, we use a production cost function $f$ with a continuous, piecewise-linear, and concave first derivative $f'$ such that $\int^{i+1}_i f'(y) dy = c(i+1)$ for all $i$. By \pref{thm:iconcave}, we get a competitive ratio of $4(1+\epsilon)$ which is strictly better than the previous competitive ratio of $2/\ln(3/2) \approx 4.93$. Our results further applies to any concave marginal cost functions.

Note that $\epsilon$ can be arbitrarily small subject to a tradeoff in the additive cost in all of the above results. 
See \pref{tab:comp} for a summary of our improvements and \pref{app:comp} for the details.

\section{Limited $k$-Supply Case}\label{sec:limsupply}

In this section, we consider the limited supply version of the online combinatorial auction problem, also known as the multi-unit combinatorial auction with multi-minded buyers.
We show how to apply our primal-dual approach to this setting and get competitive ratios matching those in \citet{BGN03} and \citet{BG13}. 
All proofs are in \pref{app:limsupply}. 

In the limited supply case, items are allocated integrally and there are exactly $k$ units of each item for sale.
In this setting, $\Delta y = 1$; $\SSS = \{0, 1\}^m$; and $f(y) = 0$ for $y \in [0, k]$ and $f(y) = +\infty$ for $y > k$.
\citet{BGN03} and \citet{BG13} showed an $O(k((m\rho)^{1/k} - 1))$-competitive algorithm, where $\rho = v_{\max}/ v_{\min}$, $v_{\max} = \max_{i,S} v_{iS}$, and $v_{\min} = \min_{i,S: v_{iS}>0} v_{iS}$.\footnote{Only the ratio $v_\textrm{max}/v_\textrm{min}$ matters. So we may assume $v_\textrm{min}$ is $1$ after scaling. Note the knowledge of $v_\textrm{max}$ is necessary to obtain a non-trivial competitive ratio, as shown in \cite{BG13}.} 
This competitive ratio is $O(\log (m\rho))$ when the supply is at least $k = \Omega(\log m)$.

We first briefly discuss the fractional case to build our intuition.
Again, it suffices to construct a pricing function that is a feasible solution to \pref{eq:diffeq} (where $f^*(p) = k p$). 
Further, since the values are bounded by $v_{\min}$ and $v_{\max}$, we may let $p(0) = v_{\min} / m$ without loss of generality (so that the initial price of all bundles are at most $v_{\min}$), and it suffices to satisfy \pref{eq:diffeq} for $v \ge 0$ s.t.\ $p(v) \le v_{\max}$.
Thus, \pref{eq:diffeq} becomes: for all $y \geq 0$ such that $p(y) \le v_{\max}$,
\[ \textstyle (p(y) - f'(y)) dy = \frac{k}{\alpha} dp.\]
Since $f'(y) = 0$ for $0 \le y \le k$ and $+\infty$ for $y > k$, we get that $p \cdot dy = \tfrac{k}{\alpha} dp$ for $0 \le y \le k$ and $p(k) \ge v_{\max}$.
By the first equation, $p(y) = p(0) \cdot \exp( \tfrac{\alpha}{k} y)$.
By the boundary condition that $p(0) = v_{\min} / m$ and $p(k) \ge v_{\max}$, let $\alpha = \ln(m \rho)$.
Thus, the fractional solution $p(y) = p(0) (m \rho)^{y/k}$ is a natural candidate solution.
For technical reasons, we decrease the starting price $p(0)$ by a factor of $2$ and use $p(y) = p(0) (2m \rho)^{y/k}$ in the integral case.


\begin{theorem}\label{thm:limsupply}
The pricing mechanism $\MMM_p$ with $p(y) = p(0) r^y$, where $p(0)=\frac{v_{\textrm{min}}}{2m}$ and $r=(2m \rho)^{1/k}$, is $\Theta(k((2m \rho)^{1/k} -1))$-competitive and incentive compatible for combinatorial auctions with supply $k$.
\end{theorem}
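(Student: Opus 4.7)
The plan is to instantiate the primal-dual local analysis from \pref{lem:diffeq2} for the $k$-supply cost function $f(y)=0$ on $[0,k]$ and $f(y)=+\infty$ for $y>k$, whose Fenchel conjugate is $f^*(p)=kp$ for $p\ge 0$. Incentive compatibility is immediate because $\MMM_p$ is a posted pricing mechanism. The substantive work is: (i) verifying that the geometric prices automatically enforce the supply constraint; (ii) certifying dual feasibility; and (iii) verifying the per-buyer primal-dual inequality for the target $\alpha = k(r-1) = k\bigl((2m\rho)^{1/k}-1\bigr)$.

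For (i), note that $p(k) = p(0)\,r^k = \tfrac{v_{\min}}{2m}\cdot 2m\rho = v_{\max}$; hence once $y_j = k$, any bundle $S \ni j$ has total price at least $v_{\max} \ge v_{iS}$, giving nonpositive utility, so the buyer strictly prefers a bundle without exhausted items (or the empty bundle), and $y_j$ never exceeds $k$. For (ii), I would set $u_i = v_{iS_i} - \sum_{j\in S_i} p_j^{i-1}$ (the buyer's utility, which is $\ge 0$) and $p_j = p_j^n = p(y_j^n)$. If $S$ contains no item exhausted before buyer $i$'s arrival, utility maximization of $i$ against $S$ combined with monotonicity of $p(\cdot)$ gives $u_i + \sum_{j\in S} p_j^n \ge u_i + \sum_{j\in S} p_j^{i-1} \ge v_{iS}$; otherwise some $j^* \in S$ already satisfies $p_{j^*}^n \ge v_{\max} \ge v_{iS}$, and dual feasibility is trivial.

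For (iii), since $y_j^{i-1} < k$ for every $j \in S_i$ by the supply argument, the cost increment vanishes and $P^i - P^{i-1} = v_{iS_i} = u_i + \sum_{j\in S_i} p_j^{i-1}$; using $p_j^i = r\,p_j^{i-1}$ gives $D^i - D^{i-1} = u_i + k(r-1)\sum_{j\in S_i} p_j^{i-1}$. Setting $\alpha = k(r-1)$,
\[
P^i - P^{i-1} - \tfrac{1}{\alpha}\bigl(D^i - D^{i-1}\bigr) = u_i\!\left(1 - \tfrac{1}{\alpha}\right) + \sum_{j\in S_i} p_j^{i-1}\!\left(1 - \tfrac{k(r-1)}{\alpha}\right) \ge 0
\]
in the meaningful regime $\alpha\ge 1$. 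Summing over $i$, combined with $P^0 = 0$ and $D^0 = km\cdot p(0) = kv_{\min}/2$, and invoking weak duality $D^n \ge \opt$, yields $W(\MMM_p) = P^n \ge \tfrac{1}{\alpha}\opt - \tfrac{kv_{\min}}{2\alpha}$, which is the claimed bound.

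The main subtlety is the handling of exhausted items: the exact identity $p(k) = v_{\max}$ does double duty, simultaneously choking off demand so the hard supply constraint is respected \emph{and} certifying dual feasibility for any bundle containing an exhausted item without further complementary-slackness bookkeeping. The factor-of-$2$ decrement in $p(0)$ (so that $r=(2m\rho)^{1/k}$ instead of $(m\rho)^{1/k}$) is a purely technical slack that absorbs tie-breaking and boundary issues into the constant additive term $kv_{\min}/(2\alpha)$ without affecting the $\Theta$ rate.
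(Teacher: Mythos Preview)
Your proposal is correct and follows essentially the same route as the paper: both verify the supply constraint via $p(k)=v_{\max}$, then instantiate the local primal--dual inequality \pref{eq:diffeq2} with $f^*(p)=kp$ to obtain $\alpha=k(r-1)$. The only cosmetic difference is the endgame: the paper further argues $D^0=\tfrac{1}{2}kv_{\min}\le\tfrac{1}{2}\opt$ to convert the additive loss into a clean factor-of-$2$ in the competitive ratio, whereas you leave the additive term $kv_{\min}/(2\alpha)$ explicit---both are valid under the paper's definition of $\alpha$-competitive and the $\Theta(\cdot)$ statement.
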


Recall that when $k = \Omega(\log m)$, the competitive ratio is $\Theta(\log (m \rho))$, since $k((2m \rho)^{1/k} -1) = k(e^{\frac{1}{k} \ln(2m \rho)} - 1) \approx k \frac{1}{k} \ln(2m \rho) =\ln(2m \rho)$. 
\citet{BG13} used a similar primal dual approach based on the standard linear program relaxations that impose the supply constraint as linear constraints (See \pref{app:limsupply}). 
Our approach is different in that we consider the supply constraint in a broader production cost model using convex programs, i.e., ``lifting'' the supply constraint into the objective. 
As a result, the pricing mechanism follows straightforwardly as a solution of a differential equation. 
We believe this approach will find further applications on similar problems.

Finally, we use our framework to show an almost matching logarithmic lower bound that applies to randomized algorithms, thus resolving an open problem by \citet{BG13} on if randomized algorithms can overcome the logarithmic lower bound.

\begin{theorem}
\label{thm:limsuplylb}
No online algorithms are $o(\tfrac{\log m}{\log\log m} + \log \rho)$-competitive for combinatorial auctions.
\end{theorem}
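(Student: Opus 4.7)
The plan is to invoke Yao's minimax principle, exhibiting for each $(m,\rho)$ a distribution on inputs that forces expected ratio $\Omega(\log m/\log\log m+\log\rho)$ against any deterministic online algorithm. Since $a+b=\Theta(\max\{a,b\})$, it suffices to prove the two bounds $\Omega(\log\rho)$ and $\Omega(\log m/\log\log m)$ via two separate instance families and invoke whichever is larger for the given $(m,\rho)$.

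For the $\Omega(\log\rho)$ lower bound, I reuse the single-item template from \pref{sec:fraclb} adapted to supply $k=1$. Set $m=1$ and $T=\lfloor\log_2\rho\rfloor$; in stage $t\in\{1,\ldots,T\}$ one buyer arrives wanting the single copy at value $2^t v_{\min}$, and the adversary stops after stage $L^*$ where $\Pr[L^*\ge t]\propto 2^{-t}$. A short computation gives $\Ex[\opt]=\Theta(Tv_{\min})$, while for any deterministic algorithm with $z_t$ the conditional probability of selling at stage $t$ given that stage is reached, $\Ex[\alg]=v_{\min}\sum_t 2^t\Pr[L^*\ge t]z_t=v_{\min}\sum_t z_t\le v_{\min}$, yielding ratio $\Omega(\log\rho)$.

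For the $\Omega(\log m/\log\log m)$ lower bound, I use a hierarchical multi-minded construction. Set $c=\lfloor\log m\rfloor$ and $L=\lfloor\log_c m\rfloor=\Theta(\log m/\log\log m)$, and build nested partitions $\mathcal{P}_0,\ldots,\mathcal{P}_L$ of the items: $\mathcal{P}_\ell$ has $m/c^\ell$ disjoint bundles of size $c^\ell$, with each $\mathcal{P}_{\ell+1}$-bundle the union of $c$ consecutive $\mathcal{P}_\ell$-bundles. At level $\ell$ one buyer per $\mathcal{P}_\ell$-bundle arrives with value $(2c)^\ell v_{\min}$, and the adversary stops after level $L^*$ where $\Pr[L^*\ge\ell]\propto 2^{-\ell}$. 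Since the nested blocking forces the item-capacity constraint $\sum_\ell a_\ell c^\ell\le m$ on the counts $a_\ell$ of level-$\ell$ bundles served by any algorithm, one obtains $\Ex[\opt]=\Theta(Lmv_{\min})$ and $\Ex[\alg]=v_{\min}\sum_\ell\Pr[L^*\ge\ell]a_\ell(2c)^\ell=v_{\min}\sum_\ell a_\ell c^\ell\le mv_{\min}$, giving ratio $\Omega(L)=\Omega(\log m/\log\log m)$.

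The main obstacle is calibrating $\rho$ across the two regimes: the hierarchical construction naturally produces $\rho=(2c)^L=\Theta(2^L m)$, so the bound $\Omega(\log m/\log\log m)$ is only delivered when $\log\rho$ is at least $\Omega(\log m/\log\log m+\log m)$. When $\rho$ is smaller, $\log\rho$ already dominates $\log m/\log\log m$, so the single-item construction alone realizes the combined lower bound. A brief case split on the relative magnitudes of $\log m/\log\log m$ and $\log\rho$ closes the argument, and Yao's minimax principle transfers everything to randomized algorithms.
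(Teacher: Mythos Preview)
Your two individual constructions are fine, but the closing case split is backwards and leaves a real hole. Your hierarchical instance uses values $(2c)^\ell v_{\min}$ and therefore only fits inside a model with $\rho\ge (2c)^L=\Theta(2^L m)$, i.e.\ $\log\rho=\Omega(\log m)$. You then write that ``when $\rho$ is smaller, $\log\rho$ already dominates $\log m/\log\log m$,'' but that is false: take $\rho=2$ and $m$ large. Then $\log\rho=1$ while $\log m/\log\log m$ is arbitrarily large, so the target bound is $\Omega(\log m/\log\log m)$; yet your single-item family yields only $\Omega(1)$ and your hierarchical family is unavailable because its values exceed $v_{\max}$. Thus the regime ``$\rho$ small, $m$ large'' is simply not covered.

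The paper avoids this by proving the $\Omega(\log m/\log\log m)$ part with $\rho=1$: all buyers have value exactly $1$, and the hierarchy is encoded in \emph{bundle sizes} rather than in values. At stage $j$ there are $kr^j$ buyers (with $r=\log m$) each of whom values any bundle of size at least $m/r^j$ at $1$; the per-item value thus grows geometrically across stages while $v_{\max}=v_{\min}=1$. The counting argument is then essentially the same as yours---the supply constraint forces too few items to be left for the last stage---but it now applies for every $\rho\ge 1$. To repair your proof you need to replace the value-scaling hierarchy by this size-scaling hierarchy (or something equivalent that keeps all nonzero values equal); once you do that, the two parts combine without any case analysis on $\rho$.
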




\bibliographystyle{chicago}
\bibliography{refs}

\begin{thebibliography}{}

\bibitem[\protect\citeauthoryear{Anand, Garg, and Kumar}{Anand
  et~al.}{2012}]{AGK12}
Anand, S., N.~Garg, and A.~Kumar (2012).
\newblock Resource augmentation for weighted flow-time explained by dual
  fitting.
\newblock In {\em SODA}, pp.\  1228--1241.

\bibitem[\protect\citeauthoryear{Balcan, Blum, and Mansour}{Balcan
  et~al.}{2008}]{BBM08}
Balcan, M.-F., A.~Blum, and Y.~Mansour (2008).
\newblock Item pricing for revenue maximization.
\newblock In {\em EC}, pp.\  50--59.

\bibitem[\protect\citeauthoryear{Bartal, Gonen, and Nisan}{Bartal
  et~al.}{2003}]{BGN03}
Bartal, Y., R.~Gonen, and N.~Nisan (2003).
\newblock Incentive compatible multi unit combinatorial auctions.
\newblock In {\em TARK}, pp.\  72--87.

\bibitem[\protect\citeauthoryear{Blum, Gupta, Mansour, and Sharma}{Blum
  et~al.}{2011}]{BGMS11}
Blum, A., A.~Gupta, Y.~Mansour, and A.~Sharma (2011).
\newblock Welfare and profit maximization with production costs.
\newblock In {\em FOCS}, pp.\  77--86.

\bibitem[\protect\citeauthoryear{Blumrosen and Nisan}{Blumrosen and
  Nisan}{2007}]{BN07}
Blumrosen, L. and N.~Nisan (2007).
\newblock Combinatorial auctions.
\newblock {\em Algorithmic Game Theory\/}.

\bibitem[\protect\citeauthoryear{Buchbinder and Gonen}{Buchbinder and
  Gonen}{}]{BG13}
Buchbinder, N. and R.~Gonen.
\newblock Incentive compatible multi-unit combinatorial auctions: a primal dual
  approach.
\newblock {\em Algorithmica (to appear)\/}.

\bibitem[\protect\citeauthoryear{Buchbinder and Naor}{Buchbinder and
  Naor}{2009}]{BN09}
Buchbinder, N. and J.~Naor (2009).
\newblock The design of competitive online algorithms via a primal-dual
  approach.
\newblock {\em Foundations and Trends in Theoretical Computer Science\/}~{\em
  3\/}(2-3), 93--263.

\bibitem[\protect\citeauthoryear{Chakraborty, Huang, and Khanna}{Chakraborty
  et~al.}{2009}]{CHK09}
Chakraborty, T., Z.~Huang, and S.~Khanna (2009).
\newblock Dynamic and non-uniform pricing strategies for revenue maximization.
\newblock In {\em FOCS}, pp.\  495--504.

\bibitem[\protect\citeauthoryear{Chawla, Hartline, Malec, and Sivan}{Chawla
  et~al.}{2010}]{CHMS10}
Chawla, S., J.~D. Hartline, D.~L. Malec, and B.~Sivan (2010).
\newblock Multi-parameter mechanism design and sequential posted pricing.
\newblock In {\em STOC}, pp.\  311--320.

\bibitem[\protect\citeauthoryear{Devanur}{Devanur}{2010}]{Dev10}
Devanur, N.~R. (2010).
\newblock Fisher markets and convex programs.
\newblock Manuscript, available at {\tt
  http://resear\\ch.microsoft.com/en-us/um/people/nikdev/}.

\bibitem[\protect\citeauthoryear{Devanur and Huang}{Devanur and
  Huang}{2014}]{DH14}
Devanur, N.~R. and Z.~Huang (2014).
\newblock Primal dual gives almost optimal energy efficient online algorithms.
\newblock In {\em SODA}, pp.\  1123--1140.

\bibitem[\protect\citeauthoryear{Devanur and Jain}{Devanur and
  Jain}{2012}]{DJ12}
Devanur, N.~R. and K.~Jain (2012).
\newblock Online matching with concave returns.
\newblock In {\em STOC}, pp.\  137--144.

\bibitem[\protect\citeauthoryear{Dughmi, Roughgarden, and Yan}{Dughmi
  et~al.}{2011}]{DRY11}
Dughmi, S., T.~Roughgarden, and Q.~Yan (2011).
\newblock From convex optimization to randomized mechanisms: toward optimal
  combinatorial auctions.
\newblock In {\em STOC}, pp.\  149--158.

\bibitem[\protect\citeauthoryear{Dughmi and Vondr{\'a}k}{Dughmi and
  Vondr{\'a}k}{}]{DV14}
Dughmi, S. and J.~Vondr{\'a}k.
\newblock Limitations of randomized mechanisms for combinatorial auctions.
\newblock {\em Games and Economic Behavior (to appear)\/}.

\bibitem[\protect\citeauthoryear{Gupta, Krishnaswamy, and Pruhs}{Gupta
  et~al.}{2012}]{GKP12}
Gupta, A., R.~Krishnaswamy, and K.~Pruhs (2012).
\newblock Online primal-dual for non-linear optimization with applications to
  speed scaling.
\newblock In {\em WAOA}, pp.\  173--186.

\bibitem[\protect\citeauthoryear{Lavi and Swamy}{Lavi and Swamy}{2011}]{LS11}
Lavi, R. and C.~Swamy (2011).
\newblock Truthful and near-optimal mechanism design via linear programming.
\newblock {\em J. ACM\/}~{\em 58\/}(6), 25.

\bibitem[\protect\citeauthoryear{Thang}{Thang}{2013}]{Tha13}
Thang, N.~K. (2013).
\newblock Lagrangian duality in online scheduling with resource augmentation
  and speed scaling.
\newblock In {\em ESA}, pp.\  755--766.

\bibitem[\protect\citeauthoryear{Vondr{\'a}k}{Vondr{\'a}k}{2008}]{Von08}
Vondr{\'a}k, J. (2008).
\newblock Optimal approximation for the submodular welfare problem in the value
  oracle model.
\newblock In {\em STOC}, pp.\  67--74.

\end{thebibliography}









\appendix

\section{LP for the Single-Unit Combinatorial Auction Problem}\label{app:singleunit}
In the single-unit combinatorial auction problem, the seller has $m$ items, represented by $[m]$, to allocate to $n$ buyers, represented by $[n]$. Let $\SSS$ be the the collection of all subsets of $[m]$. There are no production costs, but there is exactly 1 unit available for sale for each item. The seller's objective is to maximize the total social welfare which is exactly the total aggregate value of the buyers' respective allocated bundles. The following are the standard primal and dual linear program relaxations for this problem:
\begin{align*}
\textstyle \max_{x} ~~ & \textstyle \sum_i \sum_S v_{iS} x_{iS} \\
\forall i: \quad & \textstyle \sum_S x_{iS} \leq 1 \\
\forall j: \quad & \textstyle \sum_i \sum_{S: j\in S} x_{iS} \leq 1 \\
& x \geq 0 \\
\\
\textstyle \min_{u, p} ~~ & \textstyle \sum_i u_i + \sum_j p_j \\
\forall i, S: \quad & \textstyle u_i + \sum_{j\in S} p_j \geq v_{iS} \\
& u, p \ge 0 
\end{align*}
The primal variables $x_{iS}$ indicate whether or not buyer $i$ purchases bundle $S$. We have linear constraints that impose the conditions that a buyer purchases at most 1 bundle and that each item gets purchased at most once. We have dual variables $u$ and $p$: one variable $u_i$ for each buyer and one variable $p_j$ for each item. To minimize the dual objective function, we let $u_i = \max_S\{v_{iS} - \sum_{j\in S} p_j\}$ for all $i$, given the valuations $v_{iS}$ and variables $p_j$. We interpret $p_j$ as the price of item $j$ and $u_i$ as the utility of buyer $i$ for buying his utility-maximizing bundle $S$ at price $\sum_{j\in S} p_j$.

\section{Derivation}\label{app:deriv}
We derive the dual convex program ($D$) from the primal convex program ($P$) in Section~\ref{sec:prelim} using Lagrangian duality. For the linear constraints in $P$, we define the dual variables: $u_i \geq 0$ for each buyer $i$; $p_j \geq 0$ for each item $j$; $\mu_{iS} \geq 0$ for each buyer $i$ and bundle $S$; $\eta_j \geq 0$ for each item $j$. We define the Lagrangian function $L$, omitting the primal and dual variables in the function argument, as follows:
\begin{align*}
L = &\textstyle \sum_{i, S} v_{iS} x_{iS} - \sum_j f(y_j) + \sum_i \lambda_i (1- \sum_S x_{iS}) \\
	&\textstyle ~ + \sum_j \lambda'_j(y_j-\sum_{i,S:j\in S} a_{jS} x_{iS}) + \sum_{i,S} \alpha_{iS} x_{iS} \\
	&\textstyle ~ + \sum_j \beta_j y_j\\
	= &\textstyle \sum_j (y_j(\beta_j + \lambda'_j) - f(y_j)) + \sum_i \lambda_i \\
	&\textstyle ~ + \sum_{i,S} x_{iS}(v_{iS} - \lambda_i - \sum_{j\in S} \lambda'_j a_{jS} + \alpha_{iS}).
\end{align*}
Then, the dual program is $\min_{u,p,\mu,\nu \geq 0} \max_{x,y} L$. Given the dual variables, $\max_{x,y} L = \sum_j f^*(p'_j) + \sum_i \lambda_i$ as long as $p'_j \geq p_j$ for all item $j$ and $v_{iS} - \sum_{j\in S} a_{jS} p_j \leq u_i$ for all buyer $i$ and bundle $S$; otherwise, $\max_{x,y} L$ is unbounded. Note that $f^*$ is the convex conjugate of $f$, i.e., $f^*(p) = \max_y \{py - f(y)\}$. When $f$ is strictly convex and differentiable, which is true for broad classes of production cost functions we consider in this paper, $f^*$ is increasing and we have $p'_j = p_j$ for all item $j$. Finally, we obtain the dual program ($D$).


\section{Missing Proofs from \pref{sec:frac}}
\label{app:frac}

\begin{proof}[\pref{thm:fracpoly}]
Note that $\alpha(f)$ is scale invariant: if $p(v)$ is a feasible solution for \pref{eq:diffeq} with respect to $f$, then $a \cdot p(v)$ is a feasible solution with respect to $a \cdot f$ for any constant $a$.
So, we may assume $a = \tfrac{1}{\gamma+1}$ without loss of generality. 
We have $f'(y) = y^\gamma$, $f^*(p) = \tfrac{\gamma}{\gamma+1} p^{(\gamma+1)/\gamma}$, and ${f^*}'(p) = p^{1/\gamma}$.
Then, \pref{eq:diffeq} becomes
\begin{equation}
\label{eq:fracpower}
(p(y) - y^\gamma) dy = \tfrac{1}{\alpha} p(y)^{1/\gamma} dp, \textrm{ for all $y \geq 0$.}
\end{equation}

\paragraph{Upper Bound:}
Note that $p(y) = (\lambda y)^\gamma$ is a natural candidate because all terms in the above differential equation would have the same degree in $y$ and, thus, the contribution of $y$ would cancel out.
Concretely, the differential equation becomes $(\lambda^\gamma - 1) = \tfrac{\gamma \lambda^{\gamma+1}}{\alpha}$. 
Choosing $\lambda = (\gamma+1)^{1/\gamma}$ to maximize $\tfrac{1}{\alpha} = \tfrac{1}{\gamma} \tfrac{1}{\lambda} (1 - \tfrac{1}{\lambda^\gamma})$, we have that the pricing mechanism $\MMM_p$ with $p(y) = (\gamma + 1) y^\gamma$ is $(\gamma+1)^{(\gamma+1)/\gamma}$-competitive.
So, $\alpha(f) \le (\gamma+1)^{(\gamma+1)/\gamma}$.

\paragraph{Lower Bound:}
Suppose $p(y)$ is a feasible solution to the differential equation \pref{eq:fracpower} for some constant $\alpha$. 
We need to show that $\alpha \ge (\gamma+1)^{(\gamma+1)/\gamma}$.

Let $\epsilon > 0$ be an arbitrarily small constant.
We let $c_0 = (1 - \epsilon) \tfrac{\gamma}{\alpha}$ and inductively define $c_{i+1} = (1 - \epsilon) \tfrac{\gamma}{\alpha (1-c_i^\gamma)}$ for $i \ge 0$.
We will first show the following lemma:

\begin{lemma}
For any $i \ge 0$, there exists $y_i$ such that for all $y > y_i$, $y > c_i p(y)^{1/\gamma}$.
\end{lemma}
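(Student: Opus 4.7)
The plan is to prove the claim by induction on $i \ge 0$, starting from the equality version of the differential equation \pref{eq:fracpower}. This equality is available for the pricing function constructed from the infimum $\underline{y}$ (analogously to \pref{lem:fraclbinv2}): that construction satisfies the integrated inequality with equality, so differentiating specializes to
\[ (p(y) - y^\gamma)\, dy = \tfrac{1}{\alpha}\, p(y)^{1/\gamma}\, dp. \]
The key technical move is the substitution $q(y) := p(y)^{1/\gamma}$. Since $dp/dy = \gamma\, q^{\gamma - 1}\,(dq/dy)$, the ODE rearranges into the autonomous-in-ratio form
\[ \frac{dq}{dy} = \frac{\alpha}{\gamma}\left(1 - \left(\frac{y}{q(y)}\right)^\gamma\right), \]
and the lemma's claim becomes simply $y > c_i\, q(y)$ for $y > y_i$. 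Monotonicity of $p$ (hence of $q$) forces $dq/dy \ge 0$, which gives $y/q \le 1$ throughout, consistent with the regime $c_i \le 1$.

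For the base case ($i = 0$), I would use only the trivial bound $(y/q)^\gamma \ge 0$, which yields $dq/dy \le \alpha/\gamma$. Integrating gives $q(y) \le (\alpha/\gamma)\,y + q(0)$, and choosing $y_0$ large enough that the additive $q(0)$ is absorbed into a multiplicative $(1-\epsilon)$-slack delivers $y > (1-\epsilon)(\gamma/\alpha)\,q(y) = c_0\,q(y)$ for $y > y_0$. For the inductive step, assume $y > c_i\,q(y)$ for $y > y_i$. Then $(y/q(y))^\gamma > c_i^\gamma$ on that range, so $dq/dy < (\alpha/\gamma)(1 - c_i^\gamma)$ there. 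Integrating from $y_i$ to $y$ produces
\[ q(y) \le q(y_i) + \tfrac{\alpha(1 - c_i^\gamma)}{\gamma}(y - y_i), \]
so $q(y)/y$ is at most $(\alpha/\gamma)(1 - c_i^\gamma)$ in the limit $y \to \infty$. Taking $y_{i+1}$ large enough to absorb both $q(y_i)$ and $y_i$ into a single multiplicative $(1-\epsilon)$-slack gives $y > (1-\epsilon)\,\tfrac{\gamma}{\alpha(1-c_i^\gamma)}\,q(y) = c_{i+1}\,q(y)$ for $y > y_{i+1}$, exactly matching the recursion that defines $c_{i+1}$.

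The only delicate point is tracking additive constants cleanly so that a single multiplicative $(1-\epsilon)$-slack absorbs them at each step, but this is mechanical because the leading term in $q(y)$ is genuinely linear. The real engine of the argument is the substitution $q = p^{1/\gamma}$: it reduces the ODE to one that is autonomous in the ratio $y/q$, and it is precisely this structure that lets the induction on $c_i$ close into the stated recursion $c_{i+1} = (1-\epsilon)\,\gamma/(\alpha(1 - c_i^\gamma))$ rather than into something weaker.
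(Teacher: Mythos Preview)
Your proposal is correct and follows essentially the same approach as the paper. The paper works directly with $dp^{1/\gamma}$ (writing $dy \ge \tfrac{\gamma}{\alpha}\,d p^{1/\gamma}$ for the base case and $dy \ge \tfrac{\gamma}{\alpha(1-c_i^\gamma)}\,d p^{1/\gamma}$ for the step), which is exactly your substitution $q = p^{1/\gamma}$ carried out implicitly; the integration and the absorption of the additive constants into the $(1-\epsilon)$ slack are likewise the same, with the paper using $p(y)\ge y^\gamma$ to control $p(y_i)^{1/\gamma}$ and $y_i\ge 0$ to drop the $y_i$ term. One small simplification: you do not actually need the equality version from the infimum construction, since your argument only ever uses the upper bound $dq/dy \le \tfrac{\alpha}{\gamma}(1-(y/q)^\gamma)$, which already follows from the inequality form of \pref{eq:fracpower} together with monotonicity of $p$.
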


\begin{proof}
We will prove the lemma by induction. 
Let us start with the base case.
Since $y \ge 0$, we have 
\[ p \cdot dy \ge \tfrac{1}{\alpha} p^{1/\gamma} dp. \] 
Rearranging terms, we have $dy \ge \tfrac{\gamma}{\alpha} d p^{1/\gamma} = \tfrac{c_0}{1 - \epsilon} d p^{1/\gamma}$ and, thus, 
$y \ge \tfrac{c_0}{1 - \epsilon} (p(y)^{1/\gamma} - p(0)^{1/\gamma})$.
Note that for \pref{eq:fracpower} to be feasible, we have $p(y) \ge y^\gamma$.
For all $y > \tfrac{1}{\epsilon} p(0)^{1/\gamma}$, we have $p(y) > \tfrac{1}{\epsilon^\gamma} p(0)$.
Putting together, we have $y > c_0 p(y)^{1/\gamma}$.
So the base case follows.

Suppose $y > c_i p(y)^{1/\gamma}$ for all $y > y_i$. 
Then, by \pref{eq:fracpower}, we have
\[ (1 - c_i^\gamma) p \cdot dy \ge \tfrac{1}{\alpha} p^{1/\gamma} dp. \] 
Rearranging terms, we have $dy \ge \tfrac{\gamma}{\alpha (1 - c_i^\gamma)} d p^{1/\gamma} = \frac{c_{i+1}}{1-\epsilon} d p^{1/\gamma}$ for all $y \ge y_i$ and, thus, $y - y_i \ge \frac{c_{i+1}}{1-\epsilon} (p(y)^{1/\gamma} - p(y_i)^{1/\gamma})$.
For all $y > \tfrac{1}{\epsilon} p(y_i)^{1/\gamma}$, we have $p(y) \ge y^\gamma > \tfrac{1}{\epsilon^\gamma} p(y_i)$.
Further note that $y_i \ge 0$.
Putting together, we have $y > c_{i+1} p(y)^{1/\gamma}$.
\end{proof}

Recall that for \pref{eq:fracpower} to be feasible, we must have $p(y) \ge y^\gamma$.
So $c_i < 1$ for all $i$ and $\{c_i\}_{i\geq 0}$ is an increasing sequence. 
Since $\{c_i\}_{i \ge 0}$ is increasing and bounded, it converges.
For sufficiently large $i$, we have $c_{i+1} \le (1 + \epsilon) c_i$.
So $(1 - \epsilon) \tfrac{\gamma}{\alpha (1-c_i^\gamma)} = c_{i+1} \le (1 + \epsilon) c_i$ and, thus, $\tfrac{1}{\alpha} \le \tfrac{1}{\gamma} \tfrac{1 + \epsilon}{1 - \epsilon} c (1 - c^\gamma) \le \tfrac{1 + \epsilon}{1 - \epsilon} (\gamma+1)^{-(\gamma+1)/\gamma}$.
As this holds for arbitrarily small $\epsilon$, we have $\alpha(f) \geq (\gamma+1)^{(\gamma+1)/\gamma}$.
\end{proof}

\begin{proof}[\pref{thm:fconcave}]
Note that for the upper bound, it suffices to satisfy \pref{eq:diffeq} with inequality, i.e., the left hand side is greater than or equal to the right hand side.
For $\lambda > 1$ to be determined later, let $p(y) = f'(\lambda y)$. 
Then, the inequality version of \pref{eq:diffeq} becomes
\[ f'(\lambda y) - f'(y) \ge \tfrac{1}{\alpha} \lambda^2 y f''(\lambda y) \textrm{, for all $y\geq 0$}. \]
Since $f'$ is concave, $f'(\lambda y) - f'(y) \geq (\lambda-1)y f''(\lambda y)$.
So it suffices to show $\lambda-1 \geq \frac{1}{\alpha} \lambda^2$. 
To minimize $\alpha$, we choose $\lambda = 2$ to maximize $\tfrac{\lambda-1}{\lambda^2}$. 
For $\lambda=2$ and, thus, $p(y) = f'(2y)$, we get $\alpha = 4$.
\end{proof}

\begin{proof}[\pref{thm:fconvex}]
By the definition of $\Gamma^\times_{f,\alpha}$, $\Gamma^\times_{f,\alpha} \geq \tfrac{(\lambda-1)yf''(\lambda y)}{f'(\lambda y) - f'(y)}$ or, equivalently, 
\[ f'(\lambda y) - f'(y) \geq \tfrac{\lambda-1}{\Gamma^\times_{f,\alpha}} y f''(\lambda y), \textrm{ for all $y\geq 0$.}\]

Hence, it is sufficient to show $\tfrac{\lambda-1}{\Gamma^\times_{f,\lambda}} y f''(\lambda y) \geq \tfrac{1}{\alpha} \lambda^2 y f''(\lambda y)$ or $\tfrac{\lambda-1}{\lambda^2 \Gamma^\times_{f,\alpha}} \geq \tfrac{1}{\alpha}$. The rest of the analysis is identical to those of \pref{thm:fracpoly} and \pref{thm:fconcave}. 
\end{proof}

\section{Missing Proofs and Details from \pref{sec:int}}

\subsection{Missing Proofs}\label{app:iproofs}
\begin{proof}[\pref{lem:diffeq2}]
We first show that $P^i - P^{i-1} \geq \frac{1}{\alpha} (D^i - D^{i-1})$ for all $i$. For notational simplicity, let $S_i$ be the utility-maximizing bundle that buyer $i$ purchases and $v_i$ be the value $v_{iS_i}$, such that $u_i = v_i - \sum_{j\in S_i} p^{i-1}_j$. Note 
\begin{align*}
\textstyle P^i - P^{i-1} &\textstyle = v_i - \sum_j (f(y_j^i) - f(y_j^{i-1})) \\
	&\textstyle = v_i - \sum_{j\in S_i} (f(y_j^{i-1}+1) - f(y_j^{i-1})),
\end{align*}
and 
\begin{align*}
\textstyle D^i - D^{i-1} &\textstyle = u_i + \sum_{j\in S_i} (f^*(p_j^i) - f^*(p_j^{i-1})) \\
	&\textstyle = v_i - \sum_{j\in S_i} p_j^{i-1} + \sum_{j\in S_i} (f^*(p_j^i) - f^*(p_j^{i-1})).
\end{align*}
Then, $P^i - P^{i-1} \geq \frac{1}{\alpha} (D^i - D^{i-1})$ is equivalent to
\[
\textstyle(1-\frac{1}{\alpha}) v_i - \sum_{j\in S_i}(f(y_j^{i-1}+1) - f(y_j^{i-1})) \geq -\frac{1}{\alpha} \sum_{j\in S_i} p_j^{i-1} + \frac{1}{\alpha}\sum_{j\in S_i} (f^*(p_j^i) - f^*(p_j^{i-1})).
\]

As buyer $i$ maximizes his utility, $v_i \geq \sum_{j\in S_i} p_j^{i-1}$ and the left hand side of the last inequality is at least $(1-\frac{1}{\alpha}) \sum_{j\in S_i} p_j^{i-1} - \sum_{j\in S_i}(f(y_j^{i-1}+1) - f(y_j^{i-1}))$. After some algebra, it is sufficient to show 
\[
\textstyle \sum_{j\in S_i} p_j^{i-1} - \sum_{j\in S_i} (f(y_j^{i-1}+1) - f(y_j^{i-1})) \geq \frac{1}{\alpha} \sum_{j\in S_i} (f^*(p_j^i) - f^*(p_j^{i-1})),
\]
which follows from \pref{eq:diffeq2}. Therefore,  $P^i - P^{i-1} \geq \frac{1}{\alpha} (D^i - D^{i-1})$ for all $i=1, \ldots, n$.

Summing the inequality $P^i - P^{i-1} \geq \frac{1}{\alpha} (D^i - D^{i-1})$ over all $i$, we obtain $P^n - P^0 \geq \frac{1}{\alpha} (D^n - D^0)$. After rearranging terms and using the weak duality, we get 
\[\textstyle P^n \geq \frac{1}{\alpha} D^n - (\frac{1}{\alpha} D^0 - P^0) \geq \frac{1}{\alpha} \opt - (\frac{1}{\alpha} D^0 - P^0).\] 
As the final value of the primal objective function is the total social welfare that pricing mechanism $\MMM_p$ achieves, $W(\MMM_p)$, the lemma follows.
\end{proof}

\begin{proof}[\pref{thm:ipower}]
For $\lambda>1$ to be chosen later, we let the pricing function $p$ to be $p(y) = f'(\lambda \cdot (y+1))$ and initialize $y_j = \frac{1}{\epsilon}-1$ for each item $j$ in Step 1 of $\MMM_p$. Other primal and dual variables are still initialized to 0, except for the prices $p_j$ which depend on variables $y_j$. The left hand side of \pref{eq:diffeq2} can be lower bounded as follows:
\begin{align*}
\textstyle \operatorname{LHS} &= \textstyle f'(\lambda(y_j^{i-1}+1)) - (f(y_j^{i-1}+1) - f(y_j^{i-1})) \\
	&\textstyle \geq f'(\lambda(y_j^{i-1}+1)) - f'(y_j^{i-1}+1),
\end{align*}
where the inequality follows by the mean value theorem and the fact that $f'$ is an increasing function. Similarly, the right hand side of \pref{eq:diffeq2} can be upper bounded: 
\begin{align*}
\textstyle \operatorname{RHS} &\leq \textstyle \frac{1}{\alpha} (p_j^i - p_j^{i-1}) {f^*}'(p_j^i) \\
	&\textstyle \leq \frac{1}{\alpha} \lambda^2 (y_j^{i-1}+2) f''(\lambda(y_j^{i-1}+2)) \\
	&\textstyle \leq \frac{1}{\alpha} \lambda^2 (1+\epsilon)^\gamma (y^{i-1}_j+1) f''(\lambda(y^{i-1}_j+1)),
\end{align*}
where the first inequality follows from the mean value theorem and that ${f^*}'$ is an increasing function; the second from the convexity of $f'$ and that ${f^*}'$ and $f'$ are inverses; and the third from the initial conditions.

It suffices to choose $\lambda>1$ such that 
\[
\textstyle f'(\lambda(y_j^{i-1}+1)) - f'(y_j^{i-1}+1) \geq \frac{1}{\alpha} \lambda^2 (1+\epsilon)^\gamma (y^{i-1}_j+1) f''(\lambda(y^{i-1}_j+1)).
\]
For power cost functions, this reduces to $\frac{1}{(1+\epsilon)^\gamma} \frac{\lambda^\gamma-1}{\gamma \lambda^{\gamma+1}} \geq \frac{1}{\alpha}$. As in \pref{thm:fracpoly}, we choose $\lambda=(\gamma+1)^{1/\gamma}$ and obtain $\alpha = (1+\epsilon)^\gamma (\gamma+1)^{(\gamma+1)/\gamma}$. Note that $P_0 = - \sum_j f(\frac{1}{\epsilon}-1)$ and $D_0 = \sum_j f^*(f'(\frac{2}{\epsilon}))$. By \pref{lem:diffeq2}, the theorem statement follows.
\end{proof}

\begin{proof}[\pref{thm:iconcave}]
For $\lambda>1$ to be chosen later, we let the pricing function $p$ to be $p(y) = f'(\lambda \cdot (y+1))$ and initialize $y_j = \frac{1}{\epsilon}-1$ for each item $j$ in Step 1 of $\MMM_p$. As shown in the proof of \pref{thm:ipower}, the left hand side of \pref{eq:diffeq2} is lower bounded by $f'(\lambda(y_j^{i-1}+1)) - f'(y_j^{i-1}+1)$, and the right hand side is upper bounded as follows:
\begin{align*}
\textstyle \operatorname{RHS} &\textstyle \leq \frac{1}{\alpha}(f'(\lambda(y_j^{i-1}+2)) - f'(\lambda(y_j^{i-1}+1)))\lambda(y_j^{i-1}+2) \\
	& \textstyle \leq \frac{1}{\alpha} \lambda^2 (y_j^{i-1}+2) f''(\lambda(y_j^{i-1}+1)) \\
	& \textstyle \leq \frac{1}{\alpha} (1+\epsilon) \lambda^2 (y_j^{i-1}+1) f''(\lambda(y_j^{i-1}+1)),
\end{align*}
where the first inequality follows the same reasoning as in \pref{thm:ipower}; the second from the concavity of $f'$; and the third follows from the initial conditions $y_j^i\geq \frac{1}{\epsilon}-1$. 

Then, it is sufficient to choose $\lambda>1$ such that 
\[
\textstyle f'(\lambda(y_j^{i-1}+1)) - f'(y_j^{i-1}+1) \geq \frac{1}{\alpha} (1+\epsilon) \lambda^2 (y_j^{i-1}+1) f''(\lambda(y_j^{i-1}+1)).
\]
By the same reasoning as in \pref{thm:fconcave}, we choose $\lambda=2$ and obtain $\frac{1}{\alpha}=\frac{1}{4(1+\epsilon)}$. Note that $P_0 = - \sum_j f(\frac{1}{\epsilon}-1)$ and $D_0 = \sum_j f^*(f'(\frac{2}{\epsilon}))$. By \pref{lem:diffeq2}, the theorem statement follows.
\end{proof}

\begin{proof}[\pref{thm:iconvex}]
For $\lambda>1$ to be chosen later, we let the pricing function $p$ to be $p(y) = f'(\lambda \cdot (y+1))$ and initialize $y_j = \frac{1}{\epsilon}-1$ for each item $j$. The left hand side of \pref{eq:diffeq2} can be lower bounded as follows:
\begin{align*}
\textstyle \operatorname{LHS} & \textstyle \geq f'(\lambda(y_j^{i-1}+1)) - f'(y_j^{i-1}+1) \\
	& \textstyle \geq \frac{\lambda-1}{\Gamma_{f,\lambda}^\times} (y_j^{i-1} + 1) f''(\lambda(y_j^{i-1}+1)),
\end{align*}
where the first inequality follows the same reasoning as in \pref{thm:ipower} and the second from the definition of $\Gamma_{f,\lambda}^\times$. Similarly, the right hand side of \pref{eq:diffeq2} can be upper bounded: 
\begin{align*}
\textstyle \operatorname{RHS} &\textstyle \leq \frac{1}{\alpha}(f'(\lambda(y_j^{i-1}+2)) - f'(\lambda(y_j^{i-1}+1)))\lambda(y_j^{i-1}+2) \\
	& \textstyle \leq \frac{1}{\alpha} \lambda^2 \Gamma_{f, \lambda, \lambda/\epsilon}^+ (y_j^{i-1}+2) f''(\lambda(y_j^{i-1}+1)) \\
	& \textstyle \leq \frac{1}{\alpha} (1+\epsilon) \lambda^2 \Gamma_{f, \lambda, \lambda/\epsilon}^+ (y_j^{i-1}+1) f''(\lambda(y_j^{i-1}+1)),
\end{align*}
where the first inequality follows the same reasoning as in \pref{thm:ipower}; the second from the definition of $\Gamma_{f, \lambda, \lambda/\epsilon}^+$; and the third from the initial conditions $y_j^i\geq \frac{1}{\epsilon}-1$.

Then, it is sufficient to choose $\lambda$ such that 
\[
\textstyle \frac{\lambda-1}{\Gamma_{f,\lambda}^\times} (y_j^{i-1} + 1) f''(\lambda(y_j^{i-1}+1)) \geq \frac{1}{\alpha} (1+\epsilon) \lambda^2 \Gamma_{f, \lambda, \lambda/\epsilon}^+ (y_j^{i-1}+1) f''(\lambda(y_j^{i-1}+1)),
\]
which is equivalent to $\alpha \geq \frac{(1+\epsilon)\lambda^2}{\lambda-1} \cdot \Gamma_{f, \lambda, \lambda/\epsilon}^+ \cdot \Gamma_{f,\lambda}^\times$. From here, we follow the same reasoning as in \pref{thm:iconcave}. 
\end{proof}

\subsection{Detailed Comparisons with \citet{BGMS11}}\label{app:comp}
We construct a strictly convex and differentiable production cost functions $f$ given a marginal production cost function $c$ such that $f(y) = \sum_{l=1}^y c(l)$ for all sufficiently large integer values $y$ and apply results in \pref{sec:int} to get constant competitive algorithms. The equality is necessary, since the cost contributes negatively to the social welfare and optimizing the modified social welfare objective with a multiplicative approximate cost function does not lead to an approximately optimal solution with respect to the original social welfare objective.

\paragraph{Linear Marginal Costs} 

Assume that the marginal production cost function is a linear function, $c(y) = ay + b$, where $a,b \geq 0$. We use the production cost function $f(y) = \frac{a}{2}y^2 + (b+ \frac{a}{2})y$. It is straightforward to check that $f$ is strictly convex (over the domain $y \geq 0$) and differentiable and that, more importantly, $f(y) = \sum_{l=1}^y c(l)$ for all integer values $y$. Furthermore, note that $f'$ is concave. By \pref{thm:iconcave}, we obtain a $4(1+\epsilon)$-competitive pricing algorithm with some additive cost. This improves upon the previous competitive ratio of 6. 

\paragraph{Polynomial Marginal Costs}

Assume the marginal production cost function is a polynomial, $c(y) = a y^d$, where $d>1$ and $a>0$. Let $S_d(y)$ be the degree $d+1$ polynomial function given by Faulhaber's formula for the integer power sum $\sum_{l=1}^n l^d$, i.e., 
$$\textstyle S_d(n) = \frac{1}{d+1} \sum_{l=1}^{d+1} (-1)^{\delta_{ld}} \binom{d+1}{l} B_{d+1-l} n^l,$$ where $\delta_{ld}$ is the Kronecker delta and $B_l$ is the $l$-th Bernoulli number. We use the production cost function $f(y) = a\cdot S_d(y) = a_{d+1} y^{d+1} + a_d y^d + \cdots + a_0$. Note that $a_{d+1}$ and $a_d$ are positive. For sufficiently large values of $y$, $f''(y)>0$ and $f$ is strictly convex. As the following competitive analysis applies to $f$ when $y$ is sufficiently large, we may modify $f$ for small values of $y$ for the sake of strict convexity over the whole domain. For any $0 < \epsilon' < 1$ and sufficiently large values of $y$, we have the following inequalities:
\begin{align*}
&\textstyle f''(\lambda y) \leq (d+1) d a_{d+1} (1+\epsilon') (\lambda y)^{d-1}, \\
&\textstyle f''(y+\lambda) \leq (d+1) d a_{d+1} (1+\epsilon') (y+\lambda)^{d-1}, \textrm{ and}\\
&\textstyle f''(y) \geq (d+1) d a_{d+1} y^{d-1}.
\end{align*} 
For such large $y$ values, we can effectively upper bound $\Gamma_{f, \lambda}^\times$ and $\Gamma^+_{f, \lambda, \tau}$: 
\begin{align*}
&\textstyle \Gamma^\times_{f, \lambda} \leq (1+\epsilon')\lambda^{d-1}, \textrm{ and }  \\
&\textstyle \Gamma^+_{f, \lambda, \tau} \leq (1+\epsilon') (1+ \lambda/y)^{d-1} \leq (1+\epsilon')^d.
\end{align*}
We use these upper bounds in place of $\Gamma_{f, \lambda}^\times$ and $\Gamma^+_{f, \lambda, \tau}$ in \pref{thm:iconvex} for sufficiently small $\epsilon'$ to get the competitive ratio of $(1+\epsilon) (1+\frac{1}{d})^{d+1} d$ for any $\epsilon>0$. This improves upon the competitive ratio of $(1+\epsilon) 4d$.

\paragraph{Logarithmic Marginal Costs}

Assume the marginal production cost function is logarithmic, $c(y) = \ln(1+y)$. We construct a continuous, piecewise-linear, and concave $f'$ such that the corresponding production cost function $f$ satisfies $f(y) = \sum_{l=1}^y c(l)$ for all integer values $y$. Then, by \pref{thm:iconcave}, we get the competitive ratio of $4(1+\epsilon)$ which is strictly better than the previously obtained competitive ratio of $2 / \ln(3/2) \approx 4.93$. Our construction applies more generally for any concave marginal production cost function $c$ with a convex first derivative. 

Let $f(0)=0$. We define $f'$ at integer points and let $f'$ be linear over each interval $[i, i+1]$ as follows. In order to have $f(y) = \sum_{l=1}^y c(l)$ for integer values of $y$, we need to have $\int^{i+1}_i f'(y) dy = c(i+1)$ for all $i$. To this end, we let 
$$f'(i) = c(i+1) - \delta(i) \textrm{ and } f'(i+1) = c(i+1) + \delta(i),$$ 
for a suitable nonnegative function $\delta$ to be chosen later. Note $f'(i+1) = c(i+1) + \delta(i) = c(i+2) - \delta(i+1)$. It follows that 
$$\delta(i+1) = c(i+2) - c(i+1) - \delta(i), \textrm{for all $i$},$$ 
and any single value $\delta(i)$ completely determines the function $\delta$ and, consequently, $f'$. For the concavity of $f'$, we note the slope of $f'$ over $[i,i+1]$ is $2\delta(i)$ and it is sufficient to have $\delta(i)$ decreasing in $i$. Furthermore, since $\delta(i) \geq \frac{c(i+2) - c(i+1)}{2}$ implies $\delta(i+1) = c(i+2) - c(i+1) - \delta(i) \leq 2\delta(i) - \delta(i) = \delta(i)$, it is sufficient to have $\delta(i) \in J_i = [g(i+1), g(i)]$ for all $i$, where $g(i) = \frac{c(i+1) - c(i)}{2}$. 

We now show how to choose a value for $\delta(0)\in J_0$ such that $\delta(i) \in J_i$ for all $i$. We reduce the interval $[0, g(0)]$ to $[g(1), g(0)]$ by inductively mapping the points $g(i)$ to $\bar{g}(i)$ as follows: 
\begin{align*}
&\bar{g}(2j) = \bar{g}(2j-1) + (g(2j-1) - g(2j)), \textrm{ and }\\
&\bar{g}(2j+1) = \bar{g}(2j) - (g(2j) - g(2j+1)).
\end{align*}
Accordingly, we map the values $\delta(i)$ to $\bar{\delta}(i)$ in the interval $[g(1), g(0)]$. Note that $\bar{\delta}$ values superimpose onto a single point, say $\delta(0)$, in the interval. In addition, the sequence of intervals $\bar{J}_0, \bar{J}_1, \ldots$, where $\bar{J}_{2j} = [\bar{g}(2j+1), \bar{g}(2j)]$ and $\bar{J}_{2j+1} = [\bar{g}(2j+1), \bar{g}(2j+2)]$, are nested by the convexity of $c'$. It is sufficient to choose $\delta(0)$ such that it is included in $\bar{J}_i$ for all $i$. As there exists a point included in the intersection of any infinite sequence of nested intervals, we let $\delta(0)$ to be this value. It is straightforward to check that the resulting $f$ is strictly convex. The construction is complete.

\section{Additional Materials for \pref{sec:limsupply}}
\label{app:limsupply}

\subsection{LP for the Combinatorial Auction with Limited Supply}\label{app:limsupply}
In the combinatorial auction with limited supply, the seller has $m$ items, represented by $[m]$, to allocate to $n$ buyers, represented by $[n]$. Let $\SSS$ be the collection of all subsets of $[m]$. There are no production costs, but there are exactly $k$ units available for sale for each item. The seller's objective is to maximize the total social welfare which is exactly the total value of the buyers of their respective allocated bundles. The following are the standard primal and dual linear program relaxations for this problem, which was used in Buchbinder and Gonen~\cite{BG13}:

\begin{align*}
\textstyle \max_{x} ~~ & \textstyle \sum_i \sum_S v_{iS} x_{iS} \\
\forall i: \quad & \textstyle \sum_S x_{iS} \leq 1 \\
\forall j: \quad & \textstyle \sum_i \sum_{S: j\in S} x_{iS} \leq k \\
& x \geq 0 \\
\\
\textstyle \min_{u, p} ~~ & \textstyle \sum_i u_i + \sum_j p_j \\
\forall i, S: \quad & \textstyle u_i + \sum_{j\in S} p_j/k \geq v_{iS} \\
& u, p \geq 0 
\end{align*}

The primal and dual variables are the same as in the linear program relaxations for the single-unit combinatorial auction problem in Appendix~\ref{app:singleunit}.

\subsection{Missing Proof of \pref{thm:limsupply}}

\begin{proof}[\pref{thm:limsupply}]
We first show that at most $k$ units are produced and allocated for each item. Since $r = (2m \rho)^{1/k}$, the price for the $(k+1)$-th unit of any item is $p(k) = p(0) r^k = p(0) 2m \rho = v_{\max}$. By the assumption that buyers prefer the empty bundle to any nonempty bundle yielding the same utility of 0 and the definition of $v_{\max}$, no buyers will buy a bundle containing the $(k+1)$-th unit. Hence, at most $k$ units are sold for each item.

We now analyze the competitive ratio of the pricing mechanism $\MMM_p$. By the construction of the production cost function $f$ and the pricing function $p(y)$, \pref{eq:diffeq2} reduces to $p_j^{i-1} \geq \frac{k}{\alpha} (r-1) p_j^{i-1}$. The last inequality holds for $\alpha \geq k (r-1) = k((2m\rho)^{1/k}-1)$, so we let $\alpha = k((2m\rho)^{1/k}-1)$ and \pref{eq:diffeq2} holds for all $i$. By Lemma~\ref{lem:diffeq2}, 
$$\textstyle W(\MMM_p) \geq \frac{1}{\alpha} \opt - (\frac{1}{\alpha} D^0 - P^0).$$ 
Note that $P^0=0$ and $D^0 = \frac{1}{2} v_{\min} k \leq \frac{1}{2} \opt$. Furthermore, at least one unit of an item is bought by a buyer since $p(0)$ is small enough and there is a bundle of value at most $v_{\min}/2$, and  $P^n \not= P_0$ and $D^n \not= D_0$. Then, 
$$\textstyle W(\MMM_p) \geq \frac{1}{\alpha} \opt - \frac{1}{\alpha} D^0 \geq \frac{1}{2 \alpha} \opt.$$
Consequently, the final competitive ratio is $2 \alpha = \Theta(k((2m \rho)^{1/k} - 1))$.
\end{proof}

\subsection{Missing Proof of \pref{thm:limsuplylb}}

The proof follows the same framework as \pref{thm:fraclb}.
We present only a sketch in this paper.

\begin{proof}[\pref{thm:limsuplylb}]
We first prove that the competitive ratio is at least $\Omega(\log \rho)$ even for $m = 1$.
Consider a continuum of stages parameterized by $v^*$, from stage $v_{\min}$ to stage $v_{\max}$.
At each stage $v^* \geq v_{\min}$, let there be $k$ buyers with value $v^*$ for one copy of the item. 
Using the same argument as in \pref{lem:fraclbinv} and \pref{lem:fraclbinv2}, if there is an $\alpha$-competitive algorithm for all instances $v^* \ge 0$, then there is a constant $\beta$ and $y(v)$'s as a function of $v$ such that: for $v_{\min} \le v^* \le v_{\max}$,
\[
\textstyle \int_0^{v^*} v d y(v) - f(y(v^*)) \geq \tfrac{1}{\alpha} f^*(v^*) - \beta, 
\]
where $f(y) = 0$ for $0 \le y \le k$ and $+\infty$ for $y > k$, and $f^*(v) = kv$.
Differentiating both sides, we get that 
\begin{equation}
\label{eq:limsupllyinv}
\textstyle (v - f'(y)) dy \geq \tfrac{1}{\alpha} {f^*}'(v) dv, \textrm{ for } v_{\min} \le v \le v_{\max}.
\end{equation}
Since we have
$y(v) \le k$ for $v \le v_{\max}$, $v dy \geq \tfrac{k}{\alpha} dv$ for $v_{\min} \le v \le v_{\max}$.
Rearranging terms, we have $dy \geq \tfrac{k}{\alpha} \tfrac{1}{v} dv$ and thus,
\[ k \ge y(v_{\max}) - y(v_{\min}) \geq \tfrac{k}{\alpha} \ln \tfrac{v_{\max}}{v_{\min}} = \tfrac{k}{\alpha} \ln \rho.\]
So, the competitive ratio $\alpha$ is at least $\ln \rho$.

Next, we show that the competitive ratio is at least $\Omega(\tfrac{\log m}{\log\log m})$ even for $\rho = 1$.
Let us assume without loss of generality that $v_{\max} = v_{\min} = 1$.
Let $r = \log m$.
We will define $\log_r (m) + 1= \Theta(\tfrac{\log m}{\log\log m})$ different instances, and show that no online algorithm can be better than $\tfrac{1}{2}\log_r(m)$-competitive for all these instance.

For $0 \le i \le \log_r (m)$, instance $i$ is defined as follows:
let there be $i$ stages;
at stage $j$, $0 \le j \le i$, let there be $k r^j$ buyers with value $1$ for any bundle of size at least $m / r^j$ and value $0$ otherwise. That is, the buyers' value per item is $r^j / m$ at stage $j$.

The optimal offline solution for instance $i$ allocates all items to buyers at stage $i$, getting $r^i / m$ value per item and, thus, the optimal social welfare for instance $i$, $\opt(i)$, is $\opt(i) = k r^i$.

Assume for contradiction that there is a $\tfrac{1}{2} \log_r (m)$-competitive online algorithm.
By the same argument as in the proof of \pref{thm:fraclb}, it is characterized by the expected total number of items allocated up to stage $j$, denoted by $y(j)$. 
Recall that $y(j)$ is independent of $i$ other than $i \ge j$.
Note that it is also a function of $v_{\min}$ and $v_{\max}$ but we assume this implicitly and omit the parameters.
Let $\Delta y(j) = y(j) - y(j-1)$ denote the expected number of items allocated at stage $j$ (let $y(-1) = 0$).
The expected social welfare of the algorithm for instance $i$ is $\sum_{j = 0}^i \Delta y(j) \tfrac{r^i}{m}$.
Note that the contribution of stage $0$ to $i-1$ is at most $km \cdot \tfrac{r^{i-1}}{m} = \tfrac{1}{r} \opt(i) < \tfrac{1}{\log_r m} \opt(i)$, because there are $km$ items in total and item values are at most $\tfrac{r^{i-1}}{m}$ in these stages. 
So if the algorithm is $\tfrac{1}{2} \log_r m$ competitive, it must get at least $\tfrac{1}{\log_r (m)} \opt(i)$ social welfare from stage $i$ and, thus, $\Delta y(i) > \tfrac{km}{\log_r (m)}$.
Then, we have $y(\log_r (m) + 1) = \sum_{i = 0}^{\log_r(m)} \Delta y(i) > km$, contradicting the supply constraint.
\end{proof}

\end{document}